\newcommand{\tikzscale}{{0.7}}
\tikzset{multicircle/.style  args={#1, #2}{%
 alias=tmp@name, %
  postaction={%
    insert path={
     \pgfextra{%
     \pgfpointdiff{\pgfpointanchor{\pgf@node@name}{center}}%
                  {\pgfpointanchor{\pgf@node@name}{east}}%
     \pgfmathsetmacro\insiderad{\pgf@x}%
        \fill[white] (\pgf@node@name.center)  circle (\insiderad-\pgflinewidth);%
        \draw[#2] (\pgf@node@name.center)  circle (\insiderad-\pgflinewidth);%
        \fill[#2] (\pgf@node@name.center)  -- ++(0:\insiderad-\pgflinewidth) arc (0:#1:\insiderad-\pgflinewidth)--cycle;%
        }}}}}
\definecolor{yafaxiscolor}{rgb}{0.3, 0.3, 0.3}
\definecolor{yafcolor1}{rgb}{0.4, 0.165, 0.553}
\definecolor{yafcolor2}{rgb}{0.949, 0.482, 0.216}
\definecolor{yafcolor3}{rgb}{0.47, 0.549, 0.306}
\definecolor{yafcolor4}{rgb}{0.925, 0.165, 0.224}
\definecolor{yafcolor5}{rgb}{0.141, 0.345, 0.643}
\definecolor{yafcolor6}{rgb}{0.965, 0.933, 0.267}
\definecolor{yafcolor7}{rgb}{0.627, 0.118, 0.165}
\definecolor{yafcolor8}{rgb}{0.878, 0.475, 0.686}
\definecolor{yafcolor9}{rgb}{0.965, 0.733, 0.767}
\newlength{\yafaxispad}
\newlength{\yaftlpad}
\newlength{\yaflabelpad}
\newlength{\yafaxiswidth}
\newlength{\yafticklen}
\def\pgfplots@drawtickgridlines@INSTALLCLIP@onorientedsurf#1{}
\newcommand{\yafdrawxaxis}[2]{
  \pgfplotstransformcoordinatex{#1}\let\xmincoord=\pgfmathresult 
  \pgfplotstransformcoordinatex{#2}\let\xmaxcoord=\pgfmathresult 
  \pgfsetlinewidth{\yafaxiswidth} 
  \pgfsetcolor{yafaxiscolor}
  \pgfpathmoveto{\pgfpointadd{\pgfpointadd{\pgfplotspointrelaxisxy{0}{0}}{\pgfqpointxy{\xmincoord}{0}}}{\pgfqpoint{-0.5\yafaxiswidth}{\yafaxispad}}}
  \pgfpathlineto{\pgfpointadd{\pgfpointadd{\pgfplotspointrelaxisxy{0}{0}}{\pgfqpointxy{\xmaxcoord}{0}}}{\pgfqpoint{0.5\yafaxiswidth}{\yafaxispad}}}
  \pgfusepath{stroke}

}
\newcommand{\yafdrawyaxis}[2]{
  \pgfplotstransformcoordinatey{#1}\let\ymincoord=\pgfmathresult 
  \pgfplotstransformcoordinatey{#2}\let\ymaxcoord=\pgfmathresult 
  \pgfsetlinewidth{\yafaxiswidth} 
  \pgfsetcolor{yafaxiscolor}
  \pgfpathmoveto{\pgfpointadd{\pgfpointadd{\pgfplotspointrelaxisxy{0}{0}}{\pgfqpointxy{0}{\ymincoord}}}{\pgfqpoint{\yafaxispad}{-0.5\yafaxiswidth}}}
  \pgfpathlineto{\pgfpointadd{\pgfpointadd{\pgfplotspointrelaxisxy{0}{0}}{\pgfqpointxy{0}{\ymaxcoord}}}{\pgfqpoint{\yafaxispad}{0.5\yafaxiswidth}}}
  \pgfusepath{stroke}
}
\pgfplotsset{axis y line=left, axis x line=bottom,
  tick align=outside,
  compat = 1.3,
  tickwidth=\yafticklen,
  clip = false,
  every axis title shift = 0pt,
    x axis line style= {-, line width = 0pt, opacity = 0},
    y axis line style= {-, line width = 0pt, opacity = 0},
    x tick style= {line width = \yafaxiswidth, color=yafaxiscolor, yshift = \yafaxispad},
    y tick style= {line width = \yafaxiswidth, color=yafaxiscolor, xshift = \yafaxispad},
    x tick label style = {font=\scriptsize, yshift = \yaftlpad},
    y tick label style = {font=\scriptsize, xshift = \yaftlpad},
    every axis y label/.style = {at = {(ticklabel cs:0.5)}, rotate=90, anchor=center, font=\scriptsize, yshift = -\yaflabelpad},
    every axis x label/.style = {at = {(ticklabel cs:0.5)}, anchor=center, font=\scriptsize, yshift = \yaflabelpad},
    x tick label style = {font=\scriptsize, yshift = 1pt},
    grid = major,
    major grid style  = {dash pattern = on 1pt off 3 pt},
  every axis plot post/.append style= {line width=\yafaxiswidth} ,
  legend cell align = left,
  legend style = {inner sep = 1pt, cells = {font=\scriptsize}},
  legend image code/.code={%
    \draw[mark repeat=2,mark phase=2,#1] 
    plot coordinates { (0cm,0cm) (0.15cm,0cm) (0.3cm,0cm) };%
  } 
}
\tikzset{
  on each segment/.style={
    decorate,
    decoration={
      show path construction,
      moveto code={},
      lineto code={
        \path [#1]
        (\tikzinputsegmentfirst) -- (\tikzinputsegmentlast);
      },
      curveto code={
        \path [#1] (\tikzinputsegmentfirst)
        .. controls
        (\tikzinputsegmentsupporta) and (\tikzinputsegmentsupportb)
        ..
        (\tikzinputsegmentlast);
      },
      closepath code={
        \path [#1]
        (\tikzinputsegmentfirst) -- (\tikzinputsegmentlast);
      },
    },
  },
  mid arrow/.style={postaction={decorate,decoration={
        markings,
        mark=at position .6 with {\arrow[#1]{stealth}}
      }}},
}
\newcommand{\spara}[1]{\smallskip\noindent{\bf #1}}
\newcommand{\mpara}[1]{\medskip\noindent{\bf #1}}
\newcommand{\para}[1]{\noindent{\bf #1}}
\newtheorem{theorem}{Theorem}[section]
\newtheorem{observation}{Observation}[section]
\newtheorem{corollary}{Corollary}[section]
\newtheorem{proposition}{Proposition}[section]
\newtheorem{lemma}{Lemma}[section]
\newtheorem{remark}{Remark}[section]
\newtheorem{problem}{Problem}
\newtheorem{definition}{Definition}
\newcommand{\fpr}[1]{\mathopen{}\left(#1\right)}
\newcommand{\dispfunc}[2]{%
  \ensuremath{%
    \ifthenelse{\equal{\noexpand#2}{}}%
	     {#1}%
		      {{#1}\fpr{#2}}}}
\DeclareMathAlphabet{\pazocal}{OMS}{zplm}{m}{n}
\newcommand{\bigO}{\ensuremath{\mathcal{O}}\xspace}
\newcommand{\np}{\ensuremath{\mathsf{NP}}\xspace}
\newcommand{\p}{\ensuremath{\mathsf{P}}\xspace}
\newcommand{\wone}{\ensuremath{\mathsf{W[1]}}\xspace}
\newcommand{\fpt}{\ensuremath{\textsf{FPT}}\xspace}
\newcommand{\wtwo}{\ensuremath{\mathsf{W[2]}}\xspace}
\newcommand{\SETH}{\ensuremath{\textsf{SETH}}\xspace}
\newcommand{\GAPETH}{\ensuremath{\textsf{Gap\text{-}ETH}}\xspace}
\newcommand{\gapeth}{\ensuremath{\textsf{Gap\text{-}ETH}}\xspace}
\newcommand{\todo}[1]{{\textcolor{red}{TODO: #1}}}
\newcommand{\poly}{\textsf{poly}\xspace}
\newcommand{\wrt}{\text{with respect to}\xspace}
\newcommand{\divkmedian}{\ensuremath{\text{\sc Div-}k\text{\sc-Median}}\xspace}
\newcommand{\divkmeans}{\ensuremath{\text{\sc Div-}k\text{\sc-Means}}\xspace}
\newcommand{\divkcenter}{\ensuremath{\text{\sc Div-}k\text{\sc-Center}}\xspace}
\newcommand{\divksupplier}{\ensuremath{\text{\sc Div-}k\text{\sc-Supplier}}\xspace}
\newcommand{\fairkmedian}{\ensuremath{\text{\sc Fair-\-}k\text{\sc-Median}}\xspace}
\newcommand{\fairkmeans}{\ensuremath{\text{\sc Fair-\-}k\text{\sc-Means}}\xspace}
\newcommand{\fairkcenter}{\ensuremath{\text{\sc Fair-\-}k\text{\sc-Center}}\xspace}
\newcommand{\fairksupplier}{\ensuremath{\text{\sc Fair-\-}k\text{\sc-Supplier}}\xspace}
\newcommand{\fairrangekmedian}{\ensuremath{\text{\sc Fair-}k\text{\sc-Median}}\xspace}
\newcommand{\fairrangekmeans}{\ensuremath{\text{\sc Fair-}k\text{\sc-Means}}\xspace}
\newcommand{\fairrangekcenter}{\ensuremath{\text{\sc Fair-}k\text{\sc-Center}}\xspace}
\newcommand{\fairrangeksupplier}{\ensuremath{\text{\sc Fair-}k\text{\sc-Supplier}}\xspace}
\newcommand{\kmeans}{\ensuremath{k\text{\sc{-Means}}}\xspace}
\newcommand{\kcenter}{\ensuremath{k\text{\sc{-Center}}}\xspace}
\newcommand{\ksupplier}{\ensuremath{k\text{\sc{-Supplier}}}\xspace}
\newcommand{\reqsat}{\ensuremath{\text{\sc Div-(}{\vec{\alpha},\vec{\beta}}\text{\sc{)-Sat}}}\xspace}
\newcommand{\vertexcover}{\ensuremath{\text{\sc Ver\-tex\-Cov\-er}}\xspace}
\newcommand{\dominatingset}{\ensuremath{k\text{\sc -Dom\-inat\-ing\-Set}}\xspace}
\newcommand{\hittingset}{\ensuremath{\text{\sc Hitt\-ing\-Set}}\xspace}
\newcommand{\kmedianppm}{\ensuremath{k\text{\sc-Median-}p\text{\sc-\-PM}}\xspace}
\newcommand{\kmeansppm}{\ensuremath{k\text{\sc-Mea\-ns-}p\text{\sc-\-PM}}\xspace}
\newcommand{\ksupplierppm}{\ensuremath{k\text{\sc-Supp\-lier-}p\text{\sc-\-PM}}\xspace}
\newcommand{\kmediankpm}{\ensuremath{k\text{\sc-Med\-ian-\-}k\text{\sc-\-PM}}\xspace}
\newcommand{\kmeanskpm}{\ensuremath{k\text{\sc-Mea\-ns-\-}k\text{\sc-\-PM}}\xspace}
\newcommand{\ksupplierkpm}{\ensuremath{k\text{\sc-Supp\-lier-\-}k\text{\sc-\-PM}}\xspace}
\newcommand{\kmedianpm}{\ensuremath{k\text{\sc-Med-}k\text{\sc-PM}}\xspace}
\newcommand{\kmedian}{\ensuremath{k\text{\sc{-Med\-ian}}}\xspace}
\newcommand{\cost}{\ensuremath{\text{\sf cost}}\xspace}
\newcommand{\impr}{\ensuremath{\text{\sf improv}}\xspace}
\newcommand{\coreset}{\ensuremath{\text{\sf core-set}}\xspace}
\newcommand{\pattern}{constraint pattern\xspace}
\newcommand{\patternset}[1]{E(#1)\xspace}
\newcommand{\charpart}{\mathcal{P}\xspace}
\newcommand{\charvec}{\vec{\chi}\xspace}
\newcommand{\divkins}{\ensuremath{((U,d),F,C,\Gcal,\alphavec,\betavec,k)}\xspace}
\newcommand{\kt}{k+t}
\newcommand{\smallball}{\Pi\xspace}
\def\DEBUG{true}
\def\attn#1{\textcolor{red}{#1}} 
\def\todo#1{\textcolor{red}{TODO: [#1]}} 
\newcommand{\attention}[1]{\textcolor{red}{** #1 **}}
\def\rem#1{\marginpar{\raggedright\scriptsize #1}}
\newcommand{\agir}[1]{\rem{\textcolor{RedViolet}{$\bullet$ Aris: #1}}}
\newcommand{\amtr}[1]{\rem{\textcolor{BurntOrange}{$\bullet$ A: #1}}}
\newcommand{\borr}[1]{\rem{\textcolor{WildStrawberry}{$\bullet$ B: #1}}}
\newcommand{\sthr}[1]{\rem{\textcolor{Green}{$\bullet$ S: #1}}}
\newcommand{\sandy}[1]{\rem{\textcolor{violet}{$\bullet$ Sa: #1}}}
\def\attn#1{} 
\def\todo#1{} 
\def\ameet#1{} 
\newcommand{\attention}[1]{}
\newcommand{\ftmr}[1]{}
\newcommand{\faab}[1]{}
\newcommand{\sanr}[1]{}
\newcommand{\agir}[1]{}
\newcommand{\borr}[1]{}
\newcommand{\amtr}[1]{}
\newcommand{\kamr}[1]{}
\newcommand{\sthr}[1]{}
\newcommand{\roor}[1]{}
\newcommand{\danr}[1]{}
\newcommand{\sandy}[1]{}
\newcommand{\roohani}[1]{}
\title{Diversity-aware clustering: computational complexity and approximation algorithms
}
\author{
  Suhas Thejaswi\\
  Max Planck Institute for Software Systems \\
  Kaiserslautern, Germany\\
  \texttt{thejaswi@mpi-sws.org} \\
  \And
  Ameet Gadekar \\
  CISPA Helmholtz Center for Information Security\\
  Saarbr{\"u}cken, Germany\\
  \texttt{ameet.gadekar@cispa.de}
  \AND
  Bruno Ordozgoiti\\
  London, United Kingdom\\
  \texttt{bruno.ordozgoiti@gmail.com}\\
  \And
  Aristides Gionis \\
  KTH Royal Institute of Technology\\
  Stockholm, Sweden \\
  \texttt{argioni@kth.se} \\
}
\begin{document}
\maketitle

\begin{abstract}
In this work, we study diversity-aware clustering problems where the data points are associated with multiple attributes resulting in intersecting groups.  A clustering solution needs to ensure that the number of chosen cluster centers from each group should be within the range defined by a lower and upper bound threshold for each group, while simultaneously minimizing the clustering objective, which can be either $k$-median, $k$-means or $k$-supplier. 
We study the computational complexity of the proposed problems, offering insights into their \np-hardness, polynomial-time inapproximability, and fixed-parameter intractability. 
We present parameterized approximation algorithms with approximation ratios $1+ \frac{2}{e} + \epsilon \approx 1.736$, $1+\frac{8}{e} + \epsilon \approx 3.943$, and $5$ for diversity-aware $k$-median, diversity-aware $k$-means and diversity-aware $k$-supplier, respectively. 
Assuming \gapeth, the approximation ratios are tight for the diversity-aware $k$-median and diversity-aware $k$-means problems. 
Our results imply the same approximation factors for their respective fair variants with disjoint groups---fair $k$-median, fair $k$-means, and fair $k$-supplier---with lower bound requirements.
\end{abstract}

\keywords{Algorithmic fairness, Fair clustering, Intersectionality, Subgroup fairness}

\section*{Statement on ethics and integrity.}

\para{Ethics approval.} Our work is theoretical in nature, and our work does not involve any human subject study and/or crowd-sourcing, as such approval from ethics committee and/or an institutional review board is not mandated for this research. 

\para{Funding.} Suhas Thejaswi is supported by the European Research Council (ERC) under the European Union'{}s Horizon $2020$ research and innovation program ($945719$) and the European Unions'{}s SoBigData++ Transnational Access Scholarship.
%
%
Aristides Gionis is supported by the ERC Advanced Grant REBOUND ($834862$), the European Union'{}s Horizon $2020$ research and innovation project SoBigData++ (871042), and the Wallenberg AI, Autonomous Systems and Software Program (WASP) funded by the Knut and Alice Wallenberg Foundation.

\para{Conflict of interest.} Authors declare no conflict of interest.


\newpage

\section{Introduction}
{\em Diversity} is an essential design choice across numerous real-world contexts, spanning social environments~\cite{healey2019diversity}, organizational structures~\cite{shore2018inclusive}, and demographic studies~\cite{zinn1996theorising}. Embracing diversity entails acknowledging and incorporating multifaceted characteristics within groups. Supporting diversity becomes important when addressing real-world challenges, particularly in scenarios where {\em intersectionality}---the interconnected nature of social categorizations such as gender, ethnicity, religion, socio-economic status and sexual orientation---plays a pivotal role~\cite{crenshaw2013demarginalizing,runyan2018intersectionality}.

Consider the task of constituting a representative committee that accurately mirrors the demography of a broader population. In the pursuit of diversity and fairness, it is imperative to ensure rep\-re\-sen\-ta\-tion from various groups based on their gender, ethnicity, and economic sta\-tus, among others~\cite{fish1993reverse}. In reality, individuals belong to multiple social categories. For example, a person could be of a specific gender, ethnic background and economic group. Focusing solely on gender may overlook the representation with respect to ethnicity and economic status. Furthermore, considering the groups independently could neglect the intersectionality of these identities.
For instance, Kearns et al.~\cite{kearns2019empirical} show that inter\-sectional subgroups may experience increased algorithmic harm.
Notably, the outcome of algorithmic classifiers have been observed to differ significantly between black and white~females~\cite{kasy2021fairness}.

The task of choosing a diverse committee can be formulated as a clustering problem:
the distance between individuals serves as a measure of (dis)similarity of their viewpoints, 
while the attributes such as gender, ethnicity, and economic status define group affiliations. 
The goal is to identify a subset of individuals that ensures sufficient representation of all groups while minimizing the distance between the chosen committee and the broader population.
In this process, each individual is associated with their closest representative in the committee, forming clusters where committee members act as cluster centers.
The optimization task aims to strike a balance between group representation and overall cohesion in the committee-formation process.

Currently, algorithms typically attempt to ensure fairness by addressing representation for each group independently, and do not consider the complexity introduced by intersecting attributes. Consequently, fairness measures applied separately to each group may not capture the nuanced ways in which biases manifest themselves when multiple attributes are considered together, as highlighted in earlier works~\cite{kasy2021fairness,kearns2019empirical,ghosh2021characterizing,hoffmann2019fairness,kong2022are}. 
This paper focuses on the exploration of diversity-aware clustering problems, where attributes associated with entities/individuals form intersecting groups. 
Our novel approach supports diversity in data-clustering tasks, where the cluster centers encapsulate intersectionality.

More formally, in diversity-aware clustering we are given a set of data points in a metric space, a collection of potentially intersecting groups of data points, a specified lower and upper bound indicating the minimum and the maximum number of points that can be selected as cluster centers from each group, and the desired number of cluster centers to be chosen.
The task is to find a subset of data points of the desired size, ensuring that the number of points chosen from each group falls within the specified lower and upper bounds, while minimizing the clustering objective.
We focus on three common clustering objectives: \emph{$k$-median}, \emph{$k$-means}, and \emph{$k$-supplier}. The $k$-median ($k$-means) objective seeks to minimize the sum of (squared) distances between data points and their closest cluster center. 
The $k$-supplier objective minimizes the maximum distance between the data points and their closest cluster center.\footnote{In relevant literature, the data points are categorized as clients or facilities, and the two sets may overlap. A distinction is made between the $k$-center and the $k$-supplier problem based on the selection of cluster centers. In the $k$-center problem, all data points are considered potential facilities, and consequently, are eligible to be chosen as cluster centers, whereas in the $k$-supplier problem, only a subset of data points are considered as facilities~\cite{hochbaum1986unified}.} Building upon these clustering objectives, we define diversity-aware $k$-median, diversity-aware $k$-means, and diversity-aware $k$-supplier problems, respectively.
For a precise formulation of the problems, see Section~\ref{sec:problem-definition}.

Diversity-aware clustering optimizing the $k$-median objective with intersecting (facility) groups was introduced by Thejaswi et al.~\cite{thejaswi2021diversity}. In this setting, constraints are imposed on the minimum number of cluster centers (lower bound) to be chosen from each group. They established the polynomial-time inapproximability for the general variant with intersecting groups and presented polynomial-time approximation algorithms for the case with disjoint groups. In a subsequent work, Thejaswi et al.~\cite{thejaswi2022clustering} studied the variant with intersecting groups, offering complexity results and tight parameterized approximation algorithms for diversity-aware $k$-median and diversity-aware $k$-means (with lower bound constraints).
This paper extends the work of Thejaswi et al.~\cite{thejaswi2022clustering}, expanding the scope to include \emph{both lower and upper bound requirements} as well as extending the approach to accommodate $k$-median, $k$-means, and $k$-supplier clustering objectives. 
Additionally, we establish the optimality of the presented algorithms based on standard complexity-theoretic assumptions. Table~\ref{table:summary} summarizes our computational complexity and algorithmic results.
Specifically, our contributions are as follows:
\squishlist
\item We study the computational complexity of the proposed problems, offering insights into \np-hardness, polynomial-time inapproximability, and fixed parameter intractability (Section~\ref{sec:complexity}).\footnote{Fixed parameter intractability rules out algorithms whose exponential running time can be bounded by certain (natural) parameters of the problem. See Appendix~\ref{appendix:sec:parameterized} for a precise definition.} 

\item For diversity-aware $k$-median, diversity-aware $k$-means, and diversity-aware $k$-supplier problems, we present parameterized approximation algorithms with respect to the problem parameters---the number of cluster centers $k$ and the number of groups $t$---achieving approximation ratios $1 + \frac{2}{e}+\epsilon$, $1 + \frac{8}{e}+\epsilon$, and $5$, respectively (Theorems~\ref{theorem:mainfptapx} and~\ref{thm:maindivksupplier}).

\item We establish that the approximation ratios presented for diversity-aware $k$-median and diversity-aware $k$-means are optimal for any parameterized approximation algorithm \wrt parameters $k$ and $t$. This assertion is based on the Gap Exponential Time Hypothesis (\GAPETH) (Theorems~\ref{theorem:mainfptapx} and~\ref{thm:maindivksupplier}).
\squishend

The subsequent sections of this paper are structured as follows: In Section~\ref{sec:related} we discuss related works. The problem formulations are detailed in Section~\ref{sec:problem-definition}, and Section~\ref{sec:complexity} is dedicated to the discussion of computational complexity results. In Section~\ref{sec:parameterized}, we present approximation algorithms for diversity-aware clustering. Finally, in Section~\ref{sec:conclustion} we present concluding remarks and directions for future work.
\section{Related work}
\label{sec:related}

Our work builds on existing literature on clustering as well as on recent work on algorithmic fairness. Data clustering is a fundamental problem in computer science and has been studied extensively across several algorithmic regimes,  including computational complexity~\cite{guha1998greedy,cohen2019tight},  exact exponential algorithms~\cite{fomin2022exact},  approximation algorithms~\cite{charikar1999constant,arya2001local,cohen2022improved},  pseudo-approximation algorithms~\cite{li2016approximating}, and  parameterized algorithms~\cite{cohen2019tight,goyal2023tight}. The study of clustering with fairness aspects has also garnered considerable attention~%
%
\cite{hajiaghayi2012local,kleindessner2019fair,ghadiri2021socially,goyal2023tight,chen2016matroid,krishnaswamy2011matroid,hotegni2023approximation,abbasi2023parameterized,abbasi2024parameterized}. 
While we refer to the problem we study as \emph{diversity-aware clustering}, 
our formulation is also related to \emph{fair clustering}, 
which has received considerable attention in the literature recently.
Since the related work on clustering and fair clustering is extensive, 
here we only discuss the results most relevant to our work.\footnote{For a survey on fairness in clustering see Chhabra et al.~\cite{chhabra2021overview} and tutorial resources by Brubach et al.~\cite{brubach2022fairness}.}

Charikar et al.~\cite{charikar2002constant} presented the first constant-factor approximation for $k$-median in metric spaces, which was improved to $(3+\epsilon)$-approximation by Arya et al.~\cite{arya2004local} using a local-search heuristic. 
More recently, Cohen-Addad et al.~\cite{cohen2022improved} refined the local-search heuristic
to obtain a $(2.836+ \epsilon)$ approximation. 
The best-known approximation ratio for metric instances stands at $2.675$ due to Byrka et al.~\cite{byrka2014improved}. 
For $k$-means, Kanungo et al.~\cite{kanungo2004local} devised a $(9+ \epsilon)$ approximation algorithm, 
later improved to $6.357$ by Ahmadian et al.~\cite{ahmadian2019better}.
On the other hand, $k$-median and $k$-means problems are \np-hard to approximate to a factor less than $1+\frac{2}{e}$ and $1 + \frac{8}{e}$, respectively~\cite{guha1998greedy}. Bridging the gap between the lower-bound of approximation and the achievable approximation ratio remains a well-known open problem.

In the realm of fixed-parameter tractability~(\fpt), finding an optimal solution to $k$-median, $k$-means, $k$-center, and $k$-supplier problems is known to be \wtwo-hard \wrt parameter~$k$~\cite{guha1998greedy,hochbaum1986unified}. Unless explicitly stated otherwise, all \fpt algorithms discussed in this section are parameterized by $k$.
Recently, Cohen-Addad et al.~\cite{cohen2019tight} presented \fpt approximation algorithms with approximation ratios $1+\frac{2}{e}+\epsilon$ and $1+\frac{8}{e}+ \epsilon$ for $k$-median and $k$-means, respectively, which are tight assuming the \gapeth.
Assuming $\fpt \neq \wtwo$, for any $\epsilon>0$, Goyal and Jaiswal~\cite[Theorem~1, Theorem~2]{goyal2023tight} 
showed that the $k$-center and the $k$-supplier problems cannot be approximated to $(2-\epsilon)$ and $(3-\epsilon)$ factors, respectively, in \fpt time.
However, when the data points belong to Euclidean space---prevalent in many practical applications---a $(3-\epsilon)$-approximation algorithm is known for $k$-supplier in \fpt time~\cite{abbasi2024parameterized}.

The notion of fairness in algorithm design has gained significant traction in recent years~\cite{matakos2024fair,chierichetti2017fair,samadi2018fairpca}, and various fairness-aware variants of clustering formulations have been introduced, each catering to different fairness notions~\cite{hajiaghayi2012local,kleindessner2019fair,ghadiri2021socially,goyal2023tight,chen2016matroid,krishnaswamy2011matroid,hotegni2023approximation,abbasi2023parameterized,abbasi2024parameterized,chen2024approximation}. Our emphasis in this work is on cluster-center fairness, ensuring that the number of cluster centers chosen from each group in within the lower and upper bound thresholds.
A notable problem in fair clustering is the red-blue median problem~\cite{hajiaghayi2012local}, in which the facilities are colored red or blue, and a solution may contain only up to a specified number of facilities (upper-bound) of each color. This formulation was generalized to the matroid-median problem~\cite{krishnaswamy2011matroid}, where solutions must be independent sets of a matroid. Constant-factor approximation algorithms for both of these problems are known~\cite{hajiaghayi2012local,krishnaswamy2011matroid,krishnaswamy2018constant}.
Thejaswi et al.~\cite{thejaswi2021diversity} studied the fair-$k$-median problem with disjoint facility groups, where a minimum number of cluster centers  (lower-bound constraint) must be selected from each group. They established hardness results and presented approximation algorithms. 
Recently, Zhang et al.~\cite{zhang2024towards} gave a multi-swap local-search algorithm that achieves $(4k+5)$-factor approximation for fair-$k$-median with lower bound requirements, although the running time of their method is not fully detailed. 

Kleindessner et al.~\cite{kleindessner2019fair} investigated fair-$k$-center, where an exact number of cluster centers must be chosen from each group. They propose a constant-factor pseudo-polynomial time algorithm, which was improved to a polynomial-time $3$-approximation using maximal matching by Jones et al.~\cite{jones2020fair}. In both studies, they focus on scenarios where the set of clients and facilities are identical, and the constraints on the number of cluster centers from each group are exact requirements (not lower-bound constraints), making the problem relatively easier. Chen et al.~\cite{chen2024approximation}, study the more general problem of fair-$k$-supplier and present a $5$-approximation algorithm, based on maximal matching. More recently, building upon our work, Gadekar et al.~\cite{gadekar2025fair} gave a $3$ approximation algorithm for the fair $k$-supplier problem. Their algorithm runs in polynomial time when facility groups are disjoint and $\fpt(k+t)$ time when the groups intersect, considering both lower and upper bound requirements.
Recently, Hotegni et al.~\cite{hotegni2023approximation} presented a polynomial-time constant factor approximation algorithm for the fair range clustering problem with an $\ell_p$-norm objective, including both lower and upper bound requirements. 
For \fpt algorithms, as mentioned before, these problems remain \wtwo-hard. However, Abbasi et al.~\cite{abbasi2023parameterized} design an algorithmic framework that yields a $(1+\epsilon)$-\fpt approximation algorithm for several (fair) clustering variants across several metric spaces. This framework unifies previous results and extends the results to more complex clustering objectives, such as socially-fair clustering, under commonly studied metric spaces such as Euclidean and planar graphs.

\smallskip
Next, we formally introduce the diversity-aware clustering problems before continuing to study their computational complexity.
\section{Diversity-aware clustering problems} 
\label{sec:problem-definition}

In a $k$-clustering problem, we are given a metric space $(U,d)$, a set of clients~$C \subseteq U$, a set of facilities $F \subseteq U$, and an integer $k \in \ZZ_+$. The task is to find a subset of facilities $S \subseteq F$ of size $k$, which minimizes a certain clustering objective $\phi(C,S)$. The clustering problems $k$-median, $k$-means, and $k$-supplier are defined when the clustering objectives are
$\sum_{c \in C} d(c,S)$, $\sum_{c \in C} d(c,S)^2$, and $\max_{c \in C} d(c,S)$, respectively. Here, $d(c,S)$ denotes the minimum distance between~$c$ and~$S$, \ie, $d(c,S) = \min_{s\in S} d(c,s)$.

For the diversity-aware clustering problem, we consider a collection of possibly intersecting facility groups $\Gcal=\{G_i\}_{i \in [t]}$ for some $t \in \ZZ_+$, where each subset $G_i \subseteq F$ corresponds to a facility group. We are given $\alphavec=(\alpha_i)_{i\in [t]}$ and $\betavec=(\beta_i)_{i \in [t]}$) where each element $\alpha_i$ and $\beta_i$ signifies lower and upper bound on the number of facilities of $G_i$ that must be included in a clustering solution~$S$.
%
We formally introduce the diversity-aware clustering problems below.

\smallskip
\begin{problem}[Diversity-aware clustering]
\label{problem:divclust}
We are given a metric space $(U,d)$, with clients $C \subseteq U$, facilities $F \subseteq U$, 
a collection $\Gcal=\{G_i\}_{i \in [t]}$ of $t$ subsets $G_i \subseteq F$, 
vectors $\alphavec=(\alpha_i)_{i \in [t]}$ and $\betavec=(\beta_i)_{i \in [t]}$, 
and a non-negative integer $k$. 
The goal is to find a subset $S \subseteq F$ of facilities such that:
\emph{($i$)}~$|S|=k$;
\emph{($ii$)}~$S$ satisfies the constraints $\alpha_i \leq |S\cap G_i|\leq \beta_i$ for all $i \in [t]$; and
\emph{($iii$)}~$S$ minimizes the objective $\phi(C,S)$.
The \emph{diversity-aware $k$-median} problem, the \emph{diversity-aware $k$-means} problem, and 
the \emph{diversity-aware $k$-supplier} problem are defined by setting the objective function $\phi(C,S)$ to be 
$\sum_{c \in C} d(c,S)$, $\sum_{c \in C} d(c,S)^2$, and $\max_{c \in C} d(c,S)$, respectively. 
\end{problem}

\smallskip
Throughout this paper, a distinction is made based on whether the facility groups are mutually disjoint or possibly intersecting. Importantly, when the groups $\Gcal=\{G_i\}_{i \in [t]}$ are disjoint, 
we refer to the corresponding problems as {\em fair $k$-median}, {\em fair $k$-means}, and {\em fair $k$-supplier}, 
when the objective function $\phi(C,S)$ is 
$\sum_{c \in C} d(c,S)$, $\sum_{c \in C} d(c,S)^2$, and $\max_{c \in C} d(c,S)$, respectively.

To understand better the effect of the constraints of the problem, 
decoupled from the clustering objectives, we introduce the {\em diversity-requirements satisfiability problem}, 
which specifically aims to find a subset of facilities $S \subseteq F$ of size $|S|=k$ that contain at least $\alpha_i$ and at most $\beta_i$ facilities from each group $G_i$, for every $i \in [t]$. This consideration is made disregarding the clustering objective $\phi(C,S)$. The formal definition of this problem is as follows.

\smallskip
\begin{problem}[Diversity-requirements satisfiability problem]
\label{problem:reqsat}
Given a set of facilities~$F$, a collection $\Gcal=\{G_i\}_{i \in [t]}$ of facility subsets $G_i \subseteq F$, with~$i \in [t]$, vectors $\alphavec=(\alpha_i)_{i \in [t]}$ and $\betavec = (\beta_i)_{i\in [t]}$ of lower and upper bound requirements, respectively, and a non-negative integer~$k \in \ZZ_+$. Is there a subset of facilities $S \subseteq F$ of size $|S|=k$ that satisfies the constraints $\alpha_i \leq |S\cap G_i| \leq \beta_i$, for all $i \in [t]$? 
\end{problem}

\smallskip
While devising algorithmic solutions, we transform diversity-aware clustering and fair-clustering problems into instances of the  standard $k$-median, $k$-means, and $k$-supplier problems  
with $p$-partition matroid constraints. 
The formal problem definition is as follows:

\smallskip
\begin{problem}[$k$-clustering with $p$-partition matroid constraints]
\label{problem:kclusteringpm}
Given a metric space $(U,d)$, with a set $C \subseteq U$ of clients, 
a collection $\Ecal=\{E_i\}_{i \in [p]}$ of disjoint facility groups $E_i \subseteq U$, with $i \in [p]$,
called a {\em $p$-partition matroid}, and a non-negative integer $k \in \ZZ_+$,
find a subset of facilities $S \subseteq F$ of size $|S|=k$, 
containing at most one facility from each group $E_i$, {such that} the clustering objective $\phi(C,S)$ is minimized.
Based on the distinct clustering objectives, we introduce the variants of 
$k$-median, $k$-means, and $k$-supplier problems with $p$-partition matroid constraints.
\end{problem}

\smallskip
To summarize the problems that we have defined in this section,  we refer to the diversity-aware clustering problems as \divkmedian, \divkmeans, and \divksupplier;  to the fair clustering problems as \fairrangekmedian, \fairrangekmeans, and \fairrangeksupplier; and  to clustering with $p$-partition matroid constraints as \kmedianppm, \kmeansppm, and \ksupplierppm. Furthermore, we refer to the diversity-requirements satisfiability problem as \reqsat. 
\section{Computational complexity} 
\label{sec:complexity}

In this section, we first explore the polynomial-time approximability of fair $k$-clustering, 
i.e., when the facility groups are disjoint, 
and then delve into the computational complexity of diversity-aware $k$-clustering, 
i.e., focusing on intersecting facility groups. 
Specifically, we establish the \np-hardness and \wone-hardness with respect to various parameters. 
Further, we show inapproximability to any multiplicative factor even when exponential time 
with respect to various parameters is allowed.
Our computational-complexity results are crucial 
for appreciating the choice of the proposed algorithmic solutions.

\subsection{Polynomial-time approximability}

We discuss the fair-clustering problem, as defined in the previous section,  
namely, it is assumed that the facility groups are disjoint. 
This problem was defined by \citet{thejaswi2021diversity}, 
who further considered only lower-bound requirements, 
i.e., at least $\alpha_i$ facilities should be selected from each group~$G_i$. 
Thus, when referring to the fair-clustering problem in this section
we assume only lower-bound requirements.
%

\citet[Section~4.1]{thejaswi2021diversity} noted that the \fairkmedian  problem
can be approximated to a constant factor, 
as it reduces to the matroid-median problem, 
and thus, using existing methods for the latter problem. 
Such a reduction can be extended to \fairkmeans 
via the use of existing methods for matroid-means problem.
%
The best-known approximation for matroid median 
is $(7.081 + \epsilon)$~\citep[Section~6]{krishnaswamy2018constant},  
which is applicable to \divkmedian, as well. 
For two groups only, a $(5 + \epsilon)$-approximation is possible using multi-swap local search 
\citep[Section~4.2]{thejaswi2021diversity}.
More recently, \citet{zhang2024towards} obtained a $(4t + 5)$-approximation 
for \fairkmedian with $t$ groups, using $(t+1)^2$-swap local search.

The best-known approximation for matroid means 
is $64$~\citep[Thorem~7]{zhao2023improved}, 
yielding a $64$-approximation algorithm for \fairkmeans~\citep[Section~4.1]{thejaswi2021diversity}.
\citet{jones2020fair} gave a $3$-approximation algorithm for \fairkcenter, applicable when facilities and clients are identical, and exactly $\alpha_i$ centers are chosen from each group $G_i$.
%
Recently, \citet{chen2024approximation} presented a polynomial-time $5$-approximation algorithm for 
\fairksupplier, 
considering non-identical sets of clients and facilities, and selecting exactly $\alpha_i$ centers from each group $G_i$.

It is unclear whether algorithms designed for handling lower-bound or exact requirements can be adapted to solving 
fair  clustering with both lower- and upper-bound requirements.\footnote{The reduction by \citet[Section~4.1]{thejaswi2021diversity} does not hold when both upper- and lower-bound requirements are present.} 
Recently, \citet{hotegni2023approximation} presented a polynomial-time constant-factor approximation algorithm 
for fair clustering with \(\ell_p\)-norm objectives, addressing 
\fairrangekmedian, \fairrangekmeans, and \fairrangekcenter with identical sets of clients and facilities.

\subsection{Hardness and polynomial-time inapproximability}
Diversity-aware  clustering is an amalgamation of two independent problems. The first problem, referred as the diversity-requirements satisfiability problem (\reqsat), seeks to find a subset of facilities $S \subseteq F$ with cardinality $|S|=k$ that satisfies constraints $\alpha_i \leq |S \cap G_i| \leq \beta_i$, for all $i \in [t]$ (for a precise definition see Problem~\ref{problem:reqsat}). The second problem pertains to minimizing the clustering objective $\phi(C,S)$.
When the number of groups $t = 1$ with lower bound $\alpha_1 = 1$ and upper bound $\beta_1 = k$, \divkmedian, \divkmeans, and \divksupplier are equivalent to \kmedian, \kmeans, and \ksupplier, respectively. Therefore, the \np-hardness of the latter problems implies the \np-hardness of their diversity-aware counterparts. Note that the \np-hardness refers to their respective decision variants. 

\begin{observation} \label{proposition:hardness:1}
Problems \divkmedian, \divkmeans, and \divksupplier are \np-hard.
\end{observation}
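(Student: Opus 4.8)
The plan is to reduce each standard clustering problem to its diversity-aware counterpart by exhibiting the case $t=1$ as a trivial instance. Concretely, given an instance of \kmedian consisting of a metric space $(U,d)$, clients $C \subseteq U$, facilities $F \subseteq U$, and an integer $k$, I would construct an instance of \divkmedian with the same $(U,d)$, $C$, $F$, and $k$, a single facility group $G_1 = F$, and the requirement vector $\vec{r} = (0)$ (or equally well $r_1 = 0$, imposing no real constraint). Then every subset $S \subseteq F$ of size $k$ automatically satisfies $|S \cap G_1| = |S| = k \geq 0 = r_1$, so the feasible solutions of the \divkmedian instance are precisely the size-$k$ subsets of $F$, and the objective $\phi(C,S) = \sum_{c \in C} d(c,S)$ is identical in both instances. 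Hence the optimum values coincide, and this is a polynomial-time (indeed, linear-time) many-one reduction from the decision version of \kmedian to the decision version of \divkmedian. The same construction works verbatim with $\phi(C,S) = \sum_{c \in C} d(c,S)^2$ for \kmeans/\divkmeans and with $\phi(C,S) = \max_{c \in C} d(c,S)$ for \ksupplier/\divksupplier.

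The key steps, in order, are: (i) recall that the decision versions of \kmedian, \kmeans, and \ksupplier are \np-hard — this is classical, e.g.\ the \np-hardness cited earlier via \cite{guha1998greedy} for $k$-median and $k$-means and via \cite{hochbaum1986unified} for $k$-supplier; (ii) describe the reduction above, observing that it is trivially polynomial-time computable since the output instance is essentially a copy of the input with one extra group $G_1 = F$ and a zero requirement; (iii) verify correctness, namely that $(U,d,C,F,k)$ is a yes-instance of the clustering decision problem if and only if the constructed tuple $(U,d,C,F,\{G_1\},\vec{r},k)$ is a yes-instance of the corresponding diversity-aware decision problem, which follows immediately because the feasible regions and objective functions agree; (iv) conclude \np-hardness of \divkmedian, \divkmeans, and \divksupplier. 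As the excerpt already notes, ``when the number of groups is $t=1$, \divkmedian, \divkmeans, and \divksupplier are equivalent to \kmedian, \kmeans, and \ksupplier,'' so the whole argument is essentially a one-line observation dressed up as a formal reduction.

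There is really no main obstacle here — the statement is an \emph{observation} precisely because the reduction is syntactic and the verification is immediate. If I wanted to be careful about one point, it would be ensuring the reduction is stated for the \emph{decision} variants (thresholded objective: ``is there a feasible $S$ with $\phi(C,S) \leq B$?''), since \np-hardness is a property of decision problems; but this is routine and the excerpt already flags it with the remark that ``the \np-hardness refers to their respective decision variants.'' One could alternatively use $r_1 = k$ instead of $r_1 = 0$, which also forces $S = S \cap G_1$ to lie in $G_1 = F$ with $|S|=k$ — either choice makes the constraint vacuous. I would present the proof in two or three sentences: state the embedding with $t=1$, $G_1 = F$, observe feasible sets and objectives coincide, and invoke the known \np-hardness of the base problems.
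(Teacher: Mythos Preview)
Your proposal is correct and matches the paper's own argument essentially verbatim: the paper simply observes that with $t=1$ the diversity-aware problems reduce to \kmedian, \kmeans, and \ksupplier, and inherits their \np-hardness. Your additional care about the decision variants and the explicit choice of $G_1=F$, $r_1=0$ (or $r_1=k$) just makes explicit what the paper leaves implicit.
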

\smallskip

A stronger result can be established, namely, \divkmedian, \divkmeans and \divksupplier are \np-hard to approximate to any multiplicative factor in poly\-nomial time. This inapproximability result can be derived from the fact that finding a feasible solution for \reqsat is \np-hard. Thejaswi et al.~\cite{thejaswi2021diversity} originally proved Lemma~\ref{lemma:nphardness} and Theorem~\ref{theorem:inapproximability} focusing only on lower bound requirements. For com\-pleteness, we present these proofs considering both lower and upper bound requirements.

\begin{lemma} \label{lemma:nphardness}
Problem \reqsat is \np-hard.
\end{lemma}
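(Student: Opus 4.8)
The plan is to reduce from a classical \np-complete problem whose structure mirrors the ``cover each group'' flavour of \reqsat. The natural candidate is \setcover (equivalently \hittingset or \dominatingset): in \setcover we are given a universe and a family of subsets, and we ask whether some $k$ of the subsets cover the whole universe. I would set up the reduction so that the facilities of the \reqsat instance are the subsets of the \setcover instance, each element $u$ of the universe induces a group $G_u$ consisting of exactly those facilities (subsets) that contain $u$, and every requirement $r_u$ is set to $1$. Then a subset $S$ of $k$ facilities satisfies $|S \cap G_u| \ge 1$ for all $u$ precisely when the corresponding $k$ subsets cover every element of the universe, so feasible solutions of the \reqsat instance are in bijection with size-$k$ set covers. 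Since \setcover is \np-hard, \reqsat is \np-hard.

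First I would state the \setcover decision problem explicitly (universe $V$, family $\mathcal{H} = \{H_1,\dots,H_m\}$ of subsets of $V$, integer $k$; decide whether there exist $k$ sets in $\mathcal{H}$ whose union is $V$). Then I would describe the polynomial-time construction of the \reqsat instance: set $F = \{f_1,\dots,f_m\}$ with $f_j$ representing $H_j$; for each $v \in V$ let $G_v = \{ f_j : v \in H_j \}$; let $t = |V|$, let $\rvec = (1,\dots,1)$, and keep the same $k$. This is clearly computable in time polynomial in $|V| + |\mathcal{H}|$. Next I would verify both directions of correctness: if $\{H_{j_1},\dots,H_{j_k}\}$ covers $V$ then $S = \{f_{j_1},\dots,f_{j_k}\}$ has $|S| = k$ (or at most $k$ — one can pad with arbitrary unused facilities if the cover uses fewer sets, assuming $m \ge k$, which we may assume) and hits every $G_v$; conversely any feasible $S$ of size $k$ yields a cover of the same size. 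Hence the \reqsat instance is a yes-instance iff the \setcover instance is, completing the reduction.

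A minor technical point to handle cleanly is the distinction between ``exactly $k$'' and ``at most $k$'': \setcover asks for \emph{at most} $k$ sets, whereas \reqsat fixes $|S| = k$. I would dispose of this by noting that we may assume without loss of generality that the full universe is coverable by $\mathcal{H}$ (else trivially a no-instance) and that $m \ge k$, so any cover of size $\le k$ extends to one of size exactly $k$ by adding arbitrary extra facilities, which never destroys the lower-bound constraints $|S \cap G_v| \ge 1$. I expect this bookkeeping to be the only real obstacle; the combinatorial heart of the argument is immediate once the bijection between facilities-and-groups and sets-and-elements is set up. (Since the excerpt says the proof was originally given by Thejaswi et al.~\cite{thejaswi2021diversity} and is deferred to Appendix~\ref{appendix:proof:nphardness}, I would present exactly this \setcover — or \dominatingset — reduction there, as it is the standard and cleanest route.)
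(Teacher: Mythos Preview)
Your proposal is correct and follows essentially the same approach as the paper: the paper reduces from \dominatingset by setting $F=V$, defining one group $G_u=\{u\}\cup N(u)$ per vertex with requirement $1$, which is exactly your \setcover/\hittingset construction specialised to the closed-neighbourhood set system. The only noteworthy difference is the choice of source problem---the paper picks \dominatingset because the same reduction is reused later (Corollary~\ref{corollary:inapproximability:4} and Proposition~\ref{proposition:inapproximability:5}) to import \SETH lower bounds and the \wone-hardness of approximating \dominatingset with $f(k)$ vertices, so if you write up the appendix proof you may want to phrase it via \dominatingset to keep those downstream arguments seamless.
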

\begin{proof}
Given an instance $(G=(V,E),k)$ of the dominating set problem (\dominatingset),\footnote{In the dominating set problem, a graph $G=(V,E)$ and an integer $k$ are given, the objective is to decide if there exists a subset $S \subseteq V$ of vertices of cardinality $|S|=k$ such that, for every vertex $u \in V$, either $u$ is in $S$ or at least one neighbor of $u$ in $S$, that is, $u \in S$ or $N(u)\cap S \neq \emptyset$.The neighbors of vertex $u \in V$ is denoted as $N(u)$, that is, $N(u)=\{v: (u,v) \in E\}$.} 
the construction of \reqsat instance is as follows: set $U=F=V$, define the distance function $d(u,v)=1$ for all $(u,v) \in U \times U$ and $u \neq v$. The facility groups $\Gcal=\{G_u\}_{u \in V}$ are defined as $G_u = \{u\} \cup N(u)$ with lower and upper bound requirements $\alphavec=\mathbf{1}_{|V|}$ and $\betavec=\mathbf{1}_{|V|}$, where $\alpha_u, \beta_u=1$ for all $u \in V$.\footnote{A vector of length $\ell$ with all entries $1$ is denoted as $\mathbf{1}_\ell$.} The construction is polynomial in the size of the input.
Suppose $S \subseteq F$ is a feasible solution for \reqsat. It's evident from the construction that $S$ is a dominating set because $|G_u \cap S| \geq 1$, indicating that $S$ intersects $\{u\} \cup N(u)$ for all $u\in V$. If $S \subseteq F$ is a feasible solution for \dominatingset, then $S \cap G_u \neq \emptyset$ due to the fact that $S$ is a dominating set and $G_u = u \cup N(u)$. This completes our proof.
\end{proof}

The \np-hardness of \reqsat implies the polynomial-time inapproximability of \divkmedian, \divkmeans, and \divksupplier to any multiplicative factor, as stated in the following theorem.

\smallskip
\begin{theorem}
\label{theorem:inapproximability}
Assuming $\p \neq \np$, there exists no polynomial-time algorithm to approximate \divkmedian, \divkmeans, and \divksupplier to any multiplicative factor.
\end{theorem}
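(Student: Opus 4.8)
The plan is to perform a polynomial-time reduction from \reqsat, whose \np-hardness is guaranteed by Lemma~\ref{lemma:nphardness}, to the problem of merely \emph{distinguishing} between a zero-cost (or near-zero-cost) instance and an infeasible instance of each diversity-aware clustering problem. The intuition is simple: an $\alpha$-approximation algorithm, for any finite multiplicative factor $\alpha$, must in particular return \emph{some} feasible solution whenever one exists, and on an instance whose optimum is $0$ it must return a solution of cost $0$. Hence such an algorithm lets us decide feasibility of the underlying \reqsat instance.

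Concretely, I would take an arbitrary instance $(F, \Gcal, \rvec, k)$ of \reqsat and build an instance of \divkmedian (respectively \divkmeans, \divksupplier) as follows. Put the clients and facilities on the same ground set, $C = F = U$, and equip $U$ with the trivial metric $d(x,y) = 0$ for all $x, y \in U$ — or, if one insists on a genuine metric, collapse all points to a single location, or use $d(x,y)=0$ which is a valid pseudmetric and can be made a metric by an arbitrarily small perturbation that does not affect the argument. Keep the same groups $\Gcal$, the same requirement vector $\rvec$, and the same $k$. In this instance, \emph{every} subset $S \subseteq F$ of size $k$ has clustering cost $\phi(C,S) = 0$, since each client coincides (distance $0$) with every facility. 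Therefore the constructed instance has a feasible solution of cost $0$ if and only if the \reqsat instance is a yes-instance, and it has \emph{no} feasible solution at all otherwise. This construction is clearly computable in polynomial time.

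Now suppose, for contradiction, that a polynomial-time $\alpha$-approximation algorithm $\mathcal{A}$ exists for \divkmedian for some finite $\alpha \ge 1$. Run $\mathcal{A}$ on the constructed instance. If the \reqsat instance is a yes-instance, the optimum is $0$, so $\mathcal{A}$ must output a feasible $S$ with cost at most $\alpha \cdot 0 = 0$; in particular $\mathcal{A}$ returns a feasible solution. If the \reqsat instance is a no-instance, there is no size-$k$ subset satisfying the constraints, so $\mathcal{A}$ cannot return a feasible solution (we assume the approximation algorithm either outputs a valid feasible solution or reports infeasibility). Thus $\mathcal{A}$ decides \reqsat in polynomial time, contradicting Lemma~\ref{lemma:nphardness} under the assumption $\p \ne \np$. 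The identical argument applies verbatim to \divkmeans and \divksupplier, since the squared-distance and max-distance objectives are also identically $0$ on the constructed instance.

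The only subtlety — and the one point worth stating carefully rather than the "main obstacle," since there is no real obstacle here — is the convention on what an approximation algorithm does on an infeasible instance: we must assume it is able to detect infeasibility (equivalently, that it only ever outputs genuinely feasible solutions), as otherwise the notion of multiplicative approximation is vacuous on instances with no feasible solution. With that standard convention in place, the reduction is immediate, and the theorem follows. (This mirrors the treatment in Thejaswi et al.~\cite{thejaswi2021diversity}; the full details are deferred to Appendix~\ref{appendix:proof:inapproximability}.)
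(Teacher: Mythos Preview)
Your proposal is correct and follows essentially the same approach as the paper: both argue that any polynomial-time approximation algorithm must, in particular, output a \emph{feasible} solution, which decides the \np-hard feasibility problem \reqsat (equivalently, \dominatingset via Lemma~\ref{lemma:nphardness}). The only cosmetic difference is that you explicitly set all distances to $0$ to make the ``$\alpha\cdot 0 = 0$'' step crisp, whereas the paper simply reuses the discrete metric from the reduction in Lemma~\ref{lemma:nphardness} and observes directly that the returned approximate solution is a valid \reqsat solution; neither choice of metric matters, since the argument hinges purely on feasibility.
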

\begin{proof}
Assuming the existence of a polynomial-time algorithm that can approximate any feasible instance of \divkmedian (\divkmeans, \divksupplier, resp.) to a constant factor, we can use it to solve \dominatingset in polynomial time. This contradiction is obtained through the reduction described in Lemma~\ref{lemma:nphardness}. The approximate solution for \divkmedian (\divkmeans, \divksupplier, resp.) is also a valid solution for \reqsat, and consequently for \dominatingset, proving the polynomial-time inapproximability of \divkmedian (\divkmeans, \divksupplier resp.).
\end{proof}

We continue to show that inapproximability results extend to specific cases with restricted input structures. The reduction in Theorem~\ref{theorem:inapproximability} confirms that \divkmedian (\divkmeans, \divksupplier resp.) remain inapproximable in polynomial-time to any multiplicative factor, even when every requirement is $1$ \ie, $\alphavec = \betavec = \mathbf{1}_t$. Furthermore, Theorem~\ref{theorem:inapproximability:2} establishes polynomial-time inapproximability even when all subsets in collection $\Gcal = \{G_i\}_{i \in [t]}$ have size precisely $2$, \ie, $|G_i|=2$ for all $i \in [t]$. Meanwhile, Theorem~\ref{theorem:inapproximability:3} indicates inapproximability when the underlying metric space is a tree metric.

\smallskip
\begin{theorem}
\label{theorem:inapproximability:2}
Assuming $\p \neq \np$, there exists no polynomial-time algorithm to approximate \divkmedian, \divkmeans, and \divksupplier to any multiplicative factor, even if all the subsets in $\mathcal{G}=\{G_i\}_{i \in [t]}$ have size $|G_i|=2$.
\end{theorem}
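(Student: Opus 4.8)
The plan is to reduce everything to a strengthening of Lemma~\ref{lemma:nphardness}: I will show that \reqsat remains \np-hard even when every set $G_i$ has size exactly $2$ (and, in fact, every requirement $r_i$ equals $1$). This is the only new ingredient needed, because the reduction behind Theorem~\ref{theorem:inapproximability} turns a \reqsat instance into a clustering instance with the \emph{same} collection $\Gcal$ of facility groups — it only manipulates the client set and the metric, not the groups. Feeding that reduction a \reqsat instance in which all groups have size $2$ therefore yields \divkmedian, \divkmeans, and \divksupplier instances in which all groups have size $2$, and the inapproximability conclusion transfers verbatim.

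For the strengthened hardness of \reqsat I would reduce from \vertexcover. Given a loopless graph $(V,E)$ and an integer $\ell$, which we may assume satisfies $\ell \le |V|$, set $F = V$, let $\Gcal = \{\, e : e \in E \,\}$ so that each group is the two-element vertex set of an edge, let $\rvec = \mathbf{1}$, and set $k = \ell$. A subset $S \subseteq F$ with $|S| = k$ and $|S \cap e| \ge 1$ for every $e \in E$ is precisely a vertex cover of size exactly $k$; and $(V,E)$ has a vertex cover of size $\le \ell$ iff it has one of size exactly $\ell$ (pad any smaller cover with arbitrary vertices, which is possible since $\ell \le |V|$). Hence the reduction is correct, runs in polynomial time, and shows that \reqsat with $|G_i| = 2$ for all $i$ is \np-hard.

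To keep the argument self-contained I would also restate the (immediate) reduction from \reqsat to the clustering problems: given a \reqsat instance $((F,\Gcal,\rvec,k))$ with $F \ne \emptyset$, output the clustering instance with facility set $F$, the uniform metric on $F$, client set $C = \{f\}$ for an arbitrary $f \in F$, and the same $\Gcal$, $\rvec$, $k$. The feasible solutions of this clustering instance are exactly the \reqsat solutions, so a feasible solution exists iff the \reqsat instance is a yes-instance. Consequently, running any approximation algorithm (to any multiplicative factor, even input-dependent) on this instance decides \reqsat — it returns a feasible solution exactly on yes-instances and reports infeasibility exactly on no-instances — which would place \reqsat in \p, contradicting $\p \ne \np$. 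Since the constructed clustering instance inherits only size-$2$ groups from the \vertexcover-based \reqsat instance, the claimed inapproximability holds for \divkmedian, \divkmeans, and \divksupplier simultaneously.

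There is no deep obstacle here; the work is conceptual bookkeeping. The key realization is that only the \reqsat hardness has to be redone under the size-$2$ restriction, with the clustering layer contributing nothing beyond forwarding the same groups. The points to be careful about are: (i) keeping the size-$2$ restriction intact along the whole composed reduction (edges give size-$2$ groups, and neither the padding step nor the clustering reduction introduces any larger group); and (ii) handling the ``exactly $k$'' versus ``at most $\ell$'' mismatch between \reqsat/clustering and \vertexcover via the padding argument, which needs $\ell \le |V|$. If one prefers to route through Theorem~\ref{theorem:inapproximability} rather than re-proving the clustering reduction, the only thing to verify is that its reduction indeed leaves $\Gcal$ unchanged, which it does.
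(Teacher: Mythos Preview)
Your proposal is correct and takes essentially the same approach as the paper: both reduce from \vertexcover by encoding each edge as a size-$2$ facility group with requirement $1$. The only cosmetic difference is that you factor the argument through \reqsat and then invoke the group-preserving reduction of Theorem~\ref{theorem:inapproximability}, whereas the paper collapses both steps into a single direct reduction from \vertexcover to the clustering instance.
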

\begin{proof}
To establish this result, we reduce the vertex cover problem (\vertexcover) to \divkmedian.\footnote{In the vertex cover problem, a graph $G=(V,E)$ and an integer $k \in \ZZ_+$ are given, and the objective is to decide if there exists a vertex subset $S \subseteq V$ of size $|S| = k$ such that for every edge $(u,v) \in E$ at least one of the vertices $u$ or $v$ is in $S$. \vertexcover is known to be \np-hard~\cite{garey2002computers}.} 

Consider an instance $(G=(V,E), k)$ of \vertexcover with $n$ vertices, $m$ edges, and $k \leq |V|$. We construct an instance $\divkins$ of \divkmedian as follows: First, we set $U=C=F=V$. Then, for each edge $\{u,v\} \in E$ we construct a group $G_i = \{u, v\}$, $\Gcal = \{G_i\}_{i \in [m]}$. We set the lower and upper bound thresholds $\alphavec = \betavec = \mathbf{1}_m$, and finally, the distance $d(u,v)=1$, for all $(u,v) \in E$, and $d(u,v)=n+1$, for all $(u,v) \notin E$. The construction is polynomial in the size of the \vertexcover instance.

Let $S \subseteq V$ be a solution to \vertexcover. It is evident from the construction that $|S \cap G_i| \geq 1$ for each $G_i \in \mathcal{G}$ since each $G_i$ represents a set of vertices in an edge. Additionally, as $|S| \leq k$, the set $S$ also serves as a solution to \divkmedian. The argument in other direction follows a similar line of reasoning. 

The construction demonstrates that if there exists a polynomial-time approximation algorithm for \divkmedian, we can solve \vertexcover in polynomial time. However, this is unlikely under the premise that $\p \neq \np$. The same argument applies for \divkmeans and \divksupplier.
\end{proof}


A curious reader might ask whether the metric property of the problem instance, could help us design efficient solutions. After all, clustering problems on specific metrics, such as tree or Euclidean metrics, often admit efficient solutions and different hardness of approximation bounds~\cite{cohenaddad2022improved}. However, we caution against pursuing this direction, as Lemma~\ref{lemma:nphardness} holds regardless of whether the underlying distance function satisfies metric properties. To support this claim, we present the following Theorem.

\begin{theorem} \label{theorem:inapproximability:3}
Assuming $\p \neq \np$, there exists no polynomial-time algorithm to approximate \divkmedian, \divkmeans, and \divksupplier to any multiplicative factor, even if the underlying metric space $(U,d)$ is a tree metric. 
\end{theorem}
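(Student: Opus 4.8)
The plan is to note that, just as in Theorem~\ref{theorem:inapproximability} and Theorem~\ref{theorem:inapproximability:2}, the inapproximability is ultimately a consequence of the \np-hardness of deciding feasibility of the underlying \reqsat instance and does not depend on the metric at all: by definition, any polynomial-time $\alpha$-approximation algorithm (for any $\alpha \ge 1$) must return a feasible solution on every feasible instance, and hence decides feasibility. It therefore suffices to embed a hard \reqsat instance into a \emph{tree} metric in such a way that the clustering objective carries no additional information.

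Concretely, I would start from a \reqsat instance $(F,\Gcal=\{G_i\}_{i\in[t]},\vec r,k)$ with $k\ge 1$ (the case $k=0$ is feasible iff $r_i=0$ for every $i$, which is decidable directly). Introduce a fresh point $c_0\notin F$ and build the \divkmedian instance with $U=F\cup\{c_0\}$, clients $C=\{c_0\}$, facilities $F$, groups $\Gcal$, requirement vector $\vec r$, parameter $k$, and $d$ equal to the shortest-path metric of the star with centre $c_0$ and leaf set $F$ in which every edge has unit length; that is, $d(c_0,f)=1$ for all $f\in F$ and $d(f,f')=2$ for distinct $f,f'\in F$. By construction $(U,d)$ is a tree metric, and the reduction is polynomial in the input size. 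The identical instance serves for \divkmeans and \divksupplier (the supplier variant permits $C\neq F$).

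The point I would then verify is that for every $S\subseteq F$ with $|S|=k\ge 1$ we have $d(c_0,S)=1$, so $\phi(C,S)=1$ under each of the $k$-median, $k$-means and $k$-supplier objectives. Consequently the feasible solutions of the constructed clustering instance are exactly the feasible solutions of the \reqsat instance, and whenever they exist the optimum equals the fixed value $1$. Hence running any polynomial-time $\alpha$-approximation algorithm on this instance and checking (in polynomial time) whether its output is a size-$k$ subset of $F$ satisfying $|S\cap G_i|\ge r_i$ for all $i$ decides \reqsat, contradicting Lemma~\ref{lemma:nphardness} unless $\p=\np$; the same argument applies to \divkmeans and \divksupplier. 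Alternatively, one may reuse the vertex-cover reduction of Theorem~\ref{theorem:inapproximability:2} and merely replace its graph-encoding distances by this unit-star metric on $V\cup\{c_0\}$, which shows that the tree-metric hardness persists even when all groups have size $2$.

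I do not anticipate a genuine obstacle here; the only two things needing care are (i) choosing an embedding that is a bona fide tree metric while keeping the clustering cost constant over all size-$k$ subsets — handled by routing the single client through the star centre so every size-$k$ set has cost exactly $1$ — and (ii) making explicit that an $\alpha$-approximation algorithm is forced to output a feasible solution on feasible instances, which is immediate from the definition of an approximation algorithm.
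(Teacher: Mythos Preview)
Your argument is correct but follows a genuinely different route from the paper. The paper proves the theorem by contradiction via Bartal's probabilistic tree embedding: any metric embeds into a tree metric with $O(\log |U|)$ distortion, so a hypothetical polynomial-time $\alpha$-approximation on tree metrics would yield an $O(\alpha \log |U|)$-approximation on arbitrary metrics, contradicting Theorem~\ref{theorem:inapproximability}. You instead build the tree metric explicitly as a unit star centred at a single client $c_0$, so that every size-$k$ facility set has identical objective value $1$; the inapproximability then reduces immediately to the feasibility hardness of \reqsat (Lemma~\ref{lemma:nphardness}) without any appeal to metric embeddings. Your approach is more elementary and self-contained, makes transparent that the metric carries no information in these reductions, and even yields the extra observation that the tree-metric hardness persists when $|G_i|=2$ for all $i$. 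The paper's approach, on the other hand, is a clean black-box argument: once inapproximability is known on general metrics, the tree-metric case follows automatically from low-distortion embeddability, without reopening the source of hardness.
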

\begin{proof}
We present a proof by contradiction. By the seminal work of Bartal~\cite{bartal1996probabilistic}, a metric space $(U,d)$ can be embedded into a tree metric with at most $\log |U|$-factor distortion in distances. Consequently, \divkmedian (\divkmeans,\divksupplier, resp.) can be embedded into a tree metric with at most $\log |U|$-factor distortion in distances.
Assuming the existence of a polynomial-time $\gamma$-approximation algorithm for \divkmedian on a tree metric, there exists a $\gamma \cdot \log |U|$-approximation algorithm for any metric instance of \divkmedian. A similar argument holds for \divkmeans and \divksupplier.
However, the existence of such an algorithm contradicts our inapproximability result in Theorem~\ref{theorem:inapproximability}. Therefore, \divkmedian (\divkmeans, \divksupplier, resp.) are \np-hard to approximate to any multiplicative factor, even if the underlying metric space is a tree.
\end{proof}

The \np-hardness and polynomial-time inapproximability of diversity-aware  clustering, even on simpler problem instances such as tree metrics with group size two, challenges our hope of designing polynomial-time algorithms even for structured inputs. It prompts us to explore super-polynomial time algorithms that remain efficient when certain parameters of the problem are small.
Unfortunately, we show that  this is not the case for \divkmedian (\divkmeans, \divksupplier resp.) for several natural parameters of the problem.

\subsection{Parameterized complexity}
In essence, fixed parameter tractable (\fpt) algorithms restrict the exponential dependence of running time to specific parameters of the problem. When these parameters are small, the runtime of the algorithm is also small.\footnote{A problem specified by input $x$ and a parameter $k$ is {\em fixed-parameter tractable} (\fpt) with respect to $k$, if there exists an algorithm to solve every problem instance $(x,k)$ with running time of the form $f(k) |x|^{\bigO(1)}$, where $f(k)$ is function depending solely on the parameter $k$ and $|x|^{\bigO(1)} = \mathrm{poly}(|x|)$ is a polynomial function.}
There exists a class of problems that are believed to be not fixed parameter tractable with respect to certain parameters of the problem, which collectively form the $\mathbf{W}$-hierarchy.
A brief introduction to parameterized complexity and related terminology is presented in  Appendix~\ref{appendix:sec:parameterized}.\footnote{To maintain conciseness, an \fpt algorithm with respect to parameter $k$ is denoted as $\fpt(k)$. When the algorithm pertains to multiple parameters $k_1, \dots, k_\ell$, it is denoted as $\fpt(k_1+ \dots + k_\ell)$.}

The reduction presented by Guha and Khuller~\cite[Theorem~3.1]{guha1998greedy} implicitly establishes the \wtwo-hardness of \kmedian and \kmeans with respect to parameter $k$. Similarly, \ksupplier is known to be \wtwo-hard \wrt $k$ as it is a generalization of \dominatingset, which also exhibits \wtwo-hardness~\cite[Lemma~23.2.1]{downey2013fundamentals}. When the number of groups $t=1$, \divkmedian, \divkmeans, and \divksupplier are equivalent to \kmedian, \kmeans, and \ksupplier, respectively. Consequently, the former problems are also \wtwo-hard \wrt $k$. These findings are summarized in the following proposition.

\smallskip
\begin{proposition}
\label{proposition:hardness:wtwo}
Problems \divkmedian, \divkmeans, and \divksupplier are \wtwo-hard with respect to parameter $k$.
\end{proposition}

To reinforce, by combining the reduction presented in Lemma~\ref{lemma:nphardness} with the strong exponential time hypothesis (\SETH)~\cite{impagliazzo2001on}, we draw a compelling conclusion in the below corollary: focusing solely on the parameter $k$, the most viable approach for finding an optimal, or even an approximate solution for \divkmedian, \divkmeans, and \divksupplier, is a straightforward exhaustive-search algorithm.

\begin{corollary}
\label{corollary:inapproximability:4}
Assuming \SETH, for any $k \geq 3$ and $\epsilon >0$, there exists no $\bigO(|{F}|^{k - \epsilon})$ algorithm to solve \divkmedian, \divkmeans, and \divksupplier optimally. Furthermore, there exists no $\bigO(|{F}|^{k - \epsilon})$ algorithm to approximate \divkmedian, \divkmeans, and \divksupplier to any multiplicative factor.
\end{corollary}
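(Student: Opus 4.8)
The plan is to exhibit, for every fixed $k \ge 3$, a size- and parameter-preserving reduction from \dominatingset to each of \divkmedian, \divkmeans and \divksupplier, and then to invoke the \SETH-based lower bound for \dominatingset. First I would use the fact that \reqsat generalizes \hittingset --- indeed, \reqsat with $\rvec = \mathbf{1}_t$ \emph{is} \hittingset --- and that \hittingset in turn captures \dominatingset: given a graph $H = (V,E)$ on $n$ vertices and a target size $k$, put $F = V$, create one group $G_v = N[v]$ for every $v \in V$ (so $t = n$), where $N[v]$ denotes the closed neighborhood of $v$, and set $\rvec = \mathbf{1}_t$. A set $S \subseteq F$ with $|S| = k$ then satisfies $|S \cap G_v| \ge 1$ for all $v$ if and only if $S$ dominates $H$, so this \reqsat instance is feasible if and only if $H$ has a dominating set of size $k$. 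The crucial point is that this transformation leaves the parameter $k$ unchanged and produces only $|F| = n$ facilities (no polynomial blow-up), so a running-time bound of the form $|F|^{k-\epsilon}$ translates into $n^{k-\epsilon}$.

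Next I would feed this \reqsat instance into \divkmedian, \divkmeans and \divksupplier exactly as in the proof of Theorem~\ref{theorem:inapproximability}: retain $F$, $\Gcal$, $\rvec$ and $k$, add a single auxiliary client $C = \{c^*\}$, and equip $U = F \cup \{c^*\}$ with the uniform metric, i.e.\ $d(x,y) = 1$ whenever $x \ne y$ (which trivially obeys the triangle inequality). Then every size-$k$ subset $S \subseteq F$ has $\phi(C,S) = 1$ for each of the three objectives, so the optimum equals the constant $1$ whenever a feasible $S$ exists and is undefined otherwise. Consequently any algorithm approximating \divkmedian (resp.\ \divkmeans, \divksupplier) to an arbitrary multiplicative factor must output a feasible solution precisely when one exists --- it thereby decides feasibility of the \reqsat instance, hence whether $H$ admits a dominating set of size $k$. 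Composing the two reductions, an $\bigO(|F|^{k-\epsilon})$-time algorithm for any of the three problems, be it exact or approximate to any factor, would yield an $\bigO(n^{k-\epsilon})$-time algorithm for \dominatingset.

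To conclude, I would invoke the lower bound of P\u{a}tra\c{s}cu and Williams: under \SETH, for every fixed $k \ge 3$ and every $\epsilon > 0$ there is no $\bigO(n^{k-\epsilon})$-time algorithm that decides whether an $n$-vertex graph has a dominating set of size $k$. This contradicts the algorithm constructed above and proves the corollary for all three objectives simultaneously. The step demanding the most care is the construction of the second paragraph: one must ensure that the clustering objective is genuinely \emph{constant} across all feasible solutions, so that approximating to any multiplicative factor is no easier than deciding feasibility (mirroring the argument already used for Theorem~\ref{theorem:inapproximability}), and that neither reduction inflates $|F|$ beyond $\bigO(n)$ or alters the parameter $k$ --- both hold for the uniform-metric instance above, so the \SETH lower bound transfers without loss.
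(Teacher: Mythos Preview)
Your proposal is correct and follows essentially the same route as the paper: the paper also composes the closed-neighborhood reduction of Lemma~\ref{lemma:nphardness} (which gives $F=V$ and $\Gcal=\{N[v]\}_{v\in V}$, so $|F|=n$ and $k$ is unchanged) with the observation that any finite-factor approximation must decide feasibility, and then appeals to the \SETH lower bound for \dominatingset. Your construction differs only cosmetically---you introduce a single auxiliary client under the uniform metric, whereas the paper (implicitly, via Lemma~\ref{lemma:nphardness}) takes $C=F=V$---but both make the clustering cost constant over all feasible solutions, so the arguments are interchangeable.
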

\begin{proof}
\SETH implies that there is no $\bigO(|V|^{k-\epsilon})$ algorithm for \dominatingset, for any $\epsilon>0$. The reduction in Lemma~\ref{lemma:nphardness} creates an instance of \divkmedian where the set of facilities $F=V$ and groups $\Gcal=\{G_u = u \cup N(u)\}_{u \in V}$. Hence, any \fpt exact or approximation algorithm running in time $\bigO(|F|^{k-\epsilon})$ for \divkmedian can solve \dominatingset in $\bigO(|V|^{k-\epsilon})$ time, contradicting \SETH. The same argument holds for \divkmeans and \divksupplier.
\end{proof}

Given the \wtwo-hardness of diversity-aware  clustering problems with respect to $k$, it is natural to consider further relaxations on the running time of algorithms for the problem. An obvious question is whether we can approximate \divkmedian, \divkmeans, and \divksupplier in \fpt time with respect to $k$, if we are allowed to open, say, $f(k)$ facilities instead of $k$, for some function $f$. Unfortunately, this is also unlikely as \divkmedian, \divkmeans, and \divksupplier captures \dominatingset from the reduction in Lemma~\ref{lemma:nphardness}, and finding a dominating set even of size $f(k)$ is \wone-hard~\cite{karthik2019on}.

\begin{proposition}
\label{proposition:inapproximability:5}
For parameter $k$, problems \divkmedian, \divkmeans, and \divksupplier are $\wone$-hard to approximate to any multiplicative factor, even when permitted to open $f(k)$ facilities, for any polynomial function $f$.
\end{proposition}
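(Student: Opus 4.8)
The plan is to establish Proposition~\ref{proposition:inapproximability:5} by leveraging the reduction already described in the proof of Lemma~\ref{lemma:nphardness} (which reduces \dominatingset to \reqsat, and hence to \divkmedian), together with the known $\wone$-hardness of approximating \dominatingset. Recall that in that reduction, given a graph $G = (V, E)$ and integer $k$, we build an instance with facility set $F = V$, one group $G_u = \{u\} \cup N(u)$ for every vertex $u \in V$, and all lower-bound requirements equal to $1$; a set $S \subseteq V$ of size $k$ satisfies $|S \cap G_u| \ge 1$ for all $u$ if and only if $S$ is a dominating set of $G$. The distances are set so that the clustering objective is irrelevant (any feasible $S$ has the same cost, or one may set all distances equal), so $\divkmedian$ restricted to these instances is exactly the problem of deciding whether a dominating set of size $k$ exists.

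First I would make precise the notion of ``permitted to open $f(k)$ facilities'': this is the standard bicriteria relaxation where the algorithm is allowed to output a set $S'$ with $|S'| \le f(k)$ that still satisfies all the diversity constraints $|S' \cap G_i| \ge r_i$, and whose clustering cost is compared against the optimum over \emph{genuine} size-$k$ feasible solutions. In our instances, however, since the objective is trivial, such an algorithm would simply have to decide whether a set $S'$ of size at most $f(k)$ dominating $G$ exists whenever a dominating set of size $k$ exists — in other words, it would solve the gap version of \dominatingset with gap $f(k)/k$. Next I would invoke the result of Karthik, Laekhanukit, and Manurangsi~\cite{karthik2019on} (or the formulation via \wone-hardness cited in the excerpt), which states that, parameterized by the solution size $k$, it is $\wone$-hard to distinguish graphs with a dominating set of size $k$ from graphs whose smallest dominating set has size greater than $f(k)$, for any computable function $f$. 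Composing this with the reduction from Lemma~\ref{lemma:nphardness} yields the claim: an \fpt algorithm (in $k$) that approximates \divkmedian to any multiplicative factor while opening $f(k)$ facilities would, on the constructed instances, solve this gap \dominatingset problem in \fpt time, contradicting $\fpt \ne \wone$.

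The steps in order would be: (1)~recall the Lemma~\ref{lemma:nphardness} reduction and note it is ``gap-preserving in a trivial sense'' because all feasible solutions have equal cost, so the multiplicative approximation factor on the objective plays no role; (2)~observe that opening $f(k)$ facilities in the \divkmedian instance corresponds exactly to finding a dominating set of size $f(k)$ in $G$; (3)~cite the $\wone$-hardness of the gapped \dominatingset problem from~\cite{karthik2019on}; (4)~conclude by contradiction with $\fpt \ne \wone$; (5)~remark that the reductions for \divkmeans and \divksupplier are identical, since the construction never uses properties specific to the $k$-median objective. I would then note the statement should perhaps say ``for any computable function $f$'' rather than ``any polynomial function $f$'', matching the strength of the \dominatingset hardness, though restricting to polynomial $f$ only weakens the claim and is still correct.

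The main obstacle — really the only subtle point — is ensuring that the bicriteria relaxation is handled cleanly: one must verify that allowing extra open facilities in the clustering instance does not accidentally make the \reqsat constraints easier in a way that decouples from the dominating-set structure. This is fine here because each group $G_u$ is forced to be hit, and hitting all groups with a set of size $\le f(k)$ is literally the definition of a size-$\le f(k)$ dominating set; there is no slack to exploit. A secondary point is to confirm that the hardness of approximating \dominatingset we cite is the \emph{parameterized} (\fpt vs \wone) version with an arbitrary gap function, not merely the polynomial-time \np-hardness or the $o(\log n)$ \np-hardness — the excerpt's citation~\cite{karthik2019on} is precisely the right one for this. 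Beyond that, the proof is a short composition of known results and should occupy only a paragraph.
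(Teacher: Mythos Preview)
Your proposal is correct and follows essentially the same approach as the paper: invoke the reduction of Lemma~\ref{lemma:nphardness} so that any feasible (even bicriteria) solution to the constructed \divkmedian instance is a dominating set of the corresponding size, and then appeal to the $\wone$-hardness of finding a dominating set of size $f(k)$ from~\cite{karthik2019on}. Your write-up is in fact more careful than the paper's terse paragraph --- in particular your explicit handling of the bicriteria relaxation and your remark that ``computable'' is the natural quantifier over $f$ --- but the underlying argument is the same.
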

\begin{proof}
Any $\fpt(k)$ algorithm achieving a multiplicative factor approximation for \divkmedian (\divkmeans, \divksupplier, resp.) needs to solve \reqsat to satisfy requirements. Since these requirements capture \dominatingset Lemma~\ref{lemma:nphardness}, it means we can solve \dominatingset in $\fpt(k)$ time, which is a contradiction. Finally, noting the fact that finding dominating set of size $f(k)$, for any function $f$, is also \wone-hard due to~\cite[Theorem~1.3]{karthik2019on} concludes the proof.
\end{proof}
A possible way forward is to identify other parameters of the problem to design \fpt algorithms to solve the problem optimally. As established earlier, \kmedian, \kmeans, and \kcenter are special cases of \divkmedian, \divkmeans, and \divksupplier, when $t=1$. This immediately rules out an exact \fpt algorithm for \divkmedian with respect to parameters $(\kt)$. Furthermore, we caution the reader against entertaining the prospects of other, arguably natural, parameters, such as the maximum lower bound $r = \max_{i\in[t]} r_i$ (ruled out by the relation $r_i \leq k$) and the maximum number of groups a facility can belong to $\mu =\max_{f \in F}(|{G_i}: f \in G_i, i \in [t]|)$, ruled out by the relaxation $\mu \leq t$.

\smallskip
\begin{proposition}
\label{proposition:introduction:3}
Problems \divkmedian, \divkmeans, and \divksupplier are $\wtwo$-hard with respect to parameters $(\kt)$, $(r+t)$ and $(\mu+t)$.
\end{proposition}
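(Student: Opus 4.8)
The plan is to derive all three claims from the same observation that gives Proposition~\ref{proposition:hardness:wtwo}: when $t=1$, the problems \divkmedian, \divkmeans, and \divksupplier coincide with \kmedian, \kmeans, and \ksupplier, which are \wtwo-hard with respect to the number of centers~$k$ (this is exactly how Proposition~\ref{proposition:hardness:wtwo} was obtained). What differs between the three parameterizations is only the trivial group and requirement one appends, selected so that the parameter under consideration is bounded by a function of the source parameter.

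For $(k+t)$, I would map an instance $((U,d),C,F,k)$ of \kmedian to the \divkmedian instance with the single group $G_1=F$ and requirement $r_1=0$; the constraint $|S\cap G_1|\ge 0$ is vacuous, so the optima coincide, and $k+t=k+1$. For $(r+t)$, I would use the same construction but set $r_1=k$: since $|S|=k$ and $S\subseteq F=G_1$ the constraint $|S\cap G_1|\ge k$ always holds, the instance is still equivalent to the original \kmedian instance, and now $r=\max_i r_i=k$ while $t=1$, so $r+t=k+1$. In both cases we have a parameterized reduction from \kmedian parameterized by~$k$ (which is \wtwo-hard), and the identical reductions from \kmeans and \ksupplier settle \divkmeans and \divksupplier; this is the precise meaning of the earlier remark that $r$ is ``ruled out by the relation $r_i\le k$'', namely that a requirement may be as large as~$k$, so parameterizing by $r$ or $r+t$ is no weaker than parameterizing by~$k$.

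For $(\mu+t)$, the key point is that $t=1$ forces each facility to lie in at most one group, so $\mu\le 1$, and the one-group construction above (say with $r_1=0$) already yields instances with $\mu+t\le 2$; hence \divkmedian, \divkmeans, and \divksupplier are \np-hard --- indeed \wtwo-hard in the sense above --- on instances where $\mu+t$ is constant, which rules out an \fpt algorithm parameterized by $(\mu+t)$ unless $\p=\np$. This is the content of the remark that $\mu$ is ``ruled out by the relaxation $\mu\le t$'': because $\mu\le t$ always, one cannot keep $\mu$ small while letting $t$ grow. The work in all three cases is purely bookkeeping --- verifying that each wrapper ($G_1=F$ with $r_1\in\{0,k\}$) leaves the optimum unchanged and that the target parameter is bounded as claimed. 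The only conceptual point worth flagging is that for $(\mu+t)$ this construction cannot make the parameter grow with the source parameter (and a Lemma~\ref{lemma:nphardness}-style reduction, which does involve large $t$, makes $\mu+t$ unbounded in~$k$), so the statement there is in fact a para-\np-hardness --- hardness already on a constant slice of the parameter --- rather than a reduction in which the parameter scales.
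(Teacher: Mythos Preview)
Your argument is correct and matches the paper's own justification in the paragraph preceding the proposition: all three claims rest on the $t=1$ specialisation to \kmedian/\kmeans/\ksupplier (Proposition~\ref{proposition:hardness:wtwo}), with $r_1=k$ chosen for the $(r+t)$ case so that $r+t=k+1$, and the bound $\mu\le t$ forcing $\mu+t\le 2$ for the last. One remark on your closing caveat: \np-hardness on a constant parameter slice \emph{does} formally imply \wtwo-hardness under \fpt-reductions, since any problem in \wtwo reduces in polynomial time into that fixed slice and the target parameter is then bounded by a constant; so the proposition's phrasing for $(\mu+t)$ is literally correct, not merely an informal stand-in for para-\np-hardness.
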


\medskip
The above intractability results thwart our hopes of solving diversity-aware  clustering optimally in \fpt time. However, in the next section, we show that \fpt approximation algorithms can be designed for these problems.
\section{Parameterized approximation algorithms} \label{sec:parameterized}
Throughout this section, our \fpt algorithm pertains to parameters $k$, the size of solution sought and $t$, the number of groups, and it is denoted as $\fpt(\kt)$.

First, we present an $\fpt(\kt)$ algorithm for \divkmedian, and further use these techniques to derive an algorithm for \divkmeans. In addressing both \divkmedian and \divkmeans, our algorithm relies on constructing coresets of a size $\bigO(k \log |U| \nu^{-2})$, for some $\nu > 0$. However, techniques for constructing a coreset of similar size for \divksupplier are presently not known. Hence, we present an alternative algorithmic approach to solve \divksupplier in Section~\ref{sec:algorithm:fpt-divksupplier}.

\subsection{Coresets for \divkmedian and \divkmeans} \label{sec:algorithm:coresets}

The high-level idea of coresets is to reduce the number of clients such that the distortion in the sum of distances is bounded by a factor. Precisely, for every $\nu > 0$, we reduce the client set $C$ to a weighted set $C'$ with size $|C'|= \bigO(\Gamma k \log|U|)$, where the distortion in distances is bounded by a multiplicative factor $(1 \pm \nu)$ and $\Gamma > 0$.
Formally, a coreset is defined as follows:
\begin{definition}[Coreset]
Given an instance $((U,d),C,F,k)$ of \kmedian (\kmeans resp.) and a constant $\nu >0$. A
(strong) {\em coreset} is a subset $C' \subseteq C$ of clients with associated weights
$\{w_c\}_{c \in C'}$ such that for any subset of facilities $S \subseteq F$ of size $|S|=k$
it holds that:
$$
(1-\nu) \cdot \sum_{c \in C} d(c,S)^z \le \sum_{c \in C'} w_c \cdot d(c,S)^z \le
(1+\nu)\cdot \sum_{c \in C} d(c,S)^z \,,
$$
\end{definition}
where, $z=1$ for \kmedian and $z=2$ for \kmeans.

\bigskip
To achieve this, we rely on the coreset construction by Cohen-Addad et al.~\cite{cohen2021new}.
\begin{theorem}[Cohen-Addad et al.~\cite{cohen2021new}, Theorem~1]
\label{theorem:coreset}
Given a metric instance $((U,d),C,F,k)$ of \kmedian or \kmeans. For each $\nu > 0$, there exists an algorithm that, with probability at least $1-\delta$, computes a coreset $C' \subseteq C$ of size $|C'| = \bigO(\Gamma\,k \log|U|)$ in time $\bigO(|U|)$ such that $\Gamma = \min(\nu^{-2} + \nu^{-z}, k \nu^{-2})$, where $z=1$ for \kmedian and $z=2$ for \kmeans.
\end{theorem}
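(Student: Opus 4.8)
Since this theorem is due to Cohen-Addad et al.~\cite{cohen2021new}, a complete proof is not needed here; we describe the approach one would take to establish it. The plan is to run \emph{sensitivity} (importance) sampling in the Feldman--Langberg style, sharpened by the error decomposition of~\cite{cohen2021new}. First I would compute, in time $\bigO(|U|)$, a solution $A \subseteq F$ with $|A| = \bigO(k)$ whose $z$-clustering cost $\sum_{c \in C} d(c,A)^z$ is within a constant factor of the optimum --- e.g.\ by $D^z$-sampling (the seeding step of $k$-means++) or a near-linear-time bicriteria algorithm. For a client $c$, let $a(c)$ be its closest center in $A$ and let $C_a = \{ c \in C : a(c) = a \}$ denote the induced clusters.

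Next I would attach to each client $c$ a cheap upper bound on its sensitivity, of the shape
\[
s(c) \;=\; \frac{d(c,A)^z}{\sum_{c' \in C} d(c',A)^z} \;+\; \frac{1}{|C_{a(c)}|},
\]
and draw $m$ clients independently with probability proportional to $s(c)$, weighting a sampled client $c$ by $w_c \propto 1/(m\, s(c))$. A routine computation shows $\sum_{c \in C} s(c) = \bigO(k)$ and that the estimator $\sum_{c \in C'} w_c\, d(c,S)^z$ is unbiased for $\sum_{c \in C} d(c,S)^z$ for every fixed $S$. By the Feldman--Langberg reduction it then suffices to bound the number of samples $m$ for which this estimate is within a $(1 \pm \nu)$ factor \emph{simultaneously} over all $S \subseteq F$ with $|S| = k$.

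That simultaneous guarantee is the main obstacle. A naive union bound over the $\binom{|F|}{k}$ candidate solutions is hopeless, so I would first reduce the spread of pairwise distances to $\poly(|U|)$ by standard preprocessing, so that distances fall into $\bigO(\log|U|)$ scales and the per-client contribution within a scale is bounded; Bernstein's inequality then controls the deviation for a single $S$. Passing from one $S$ to all of them can be done in two ways: a net / pseudo-dimension argument whose effective dimension is governed by the $\bigO(k)$ centers of $A$, giving $m = \bigO(k \nu^{-2} \cdot k \log|U|)$ (the $\Gamma = k\nu^{-2}$ regime); or the finer chaining/grouping argument of~\cite{cohen2021new}, which partitions clients by distance scale and by cost class and argues concentration separately for the few ``heavy'' classes and the uniformly ``light'' remainder, shaving the extra factor of $k$ and yielding the $\Gamma = \nu^{-2} + \nu^{-z}$ regime. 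Taking the better of the two analyses gives $\Gamma = \min(\nu^{-2} + \nu^{-z},\, k\nu^{-2})$; an extra $\log(1/\delta)$ factor in $m$ (absorbed into $\bigO(\cdot)$) boosts the success probability to $1 - \delta$, and since the cost is dominated by computing $A$ and the linear-time sampling, the running time is $\bigO(|U|)$.

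Finally I would stress that the guarantee holds for \emph{every} $k$-subset $S \subseteq F$ and refers to no diversity or matroid constraint, so the coreset $C'$ can be substituted verbatim into our \divkmedian and \divkmeans instances --- which is precisely why it can be used as a black box in the remainder of this section.
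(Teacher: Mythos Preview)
The paper gives no proof of this theorem whatsoever: it is stated purely as a citation of~\cite{cohen2021new} and used as a black box, exactly as you anticipate in your first sentence. Your subsequent sketch of the sensitivity-sampling framework and the two analyses yielding the $\min(\nu^{-2}+\nu^{-z},\,k\nu^{-2})$ bound is therefore additional material that the paper does not contain; it is a reasonable high-level account of the techniques in the cited work, but for the purposes of matching the paper you could have stopped after the first sentence.
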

To simplify subsequent discussions, we consider $\Gamma = 2 \cdot \nu^{-2}$ and regard the coreset size as $\bigO(\nu^{-2} k \log |U|)$.
Observe that the coresets obtained for \kmedian and \kmeans (using the above theorem) also serve as coresets for \kmedianppm and \kmeansppm, respectively. Furthermore, the coreset construction remains consistent for \divkmedian and \divkmeans, given that their respective clustering objective remains unchanged with respect to \kmedian and \kmeans.

\xhdr{Overview of $\fpt(\kt)$ algorithm}
A birds-eye-view of our \fpt algorithm is as follows: In Section~\ref{sec:algorithm:enumerating}, for a feasible instance of \divkmedian (\divkmeans resp.), we start by enumerating collections of facility subsets that satisfy lower and upper bound requirements. In Section~\ref{sec:algorithm:fpt-divkmedian}, for each collection satisfying the criteria, we derive a constant-factor approximation of the optimal cost with respect to that particular collection. Given that at least one of these feasible solutions contain the optimal solution, the corresponding approximate solution obtained will serve as an approximation for \divkmedian (\divkmeans resp.).

\subsection{Enumerating feasible constraint patterns} \label{sec:algorithm:enumerating}
We begin by defining the concept of a characteristic vector and a constraint pattern. 
Given an instance $\divkins$ of diversity-aware clustering with $\Gcal=\{G_i\}_{i \in [t]}$. A characteristic vector of a facility $f \in F$ is a vector denoted as $\charvec_f \in \{0,1\}^t$ such that $i$-th index of $\charvec_f$ is set to $1$ if $f \in G_i$, $0$ otherwise.
Consider the set $\{\charvec_f\}_{f\in F}$ of the characteristic vectors of facilities in $F$. For each $\gammavec \in \{0,1\}^t$, let $\patternset{\gammavec} = \{f \in F: {\charvec}_f=\gammavec\}$ denote the set of all facilities with characteristic vector $\gammavec$.  Finally, $\charpart=\{E(\gammavec)\}_{\gammavec \in \{0,1\}^t}$ induces a partition on $F$.
\begin{definition}[Constraint pattern]
Given a $k$-multiset $\Epcal=\{E(\gammavec_{i})\}_{i \in [k]}$, where each $E(\gammavec_{i}) \in \charpart$, the \pattern associated with $\Epcal$ is the vector obtained by the element-wise sum of the characteristic vectors $\{\gammavec_{1},\dots,\gammavec_{k}\}$, \ie, $\sum_{i \in [k]} \gammavec_{i}$.
A \pattern is said to be feasible if $\alphavec  \leq \sum_{i \in [k]} \gammavec_{i} \leq \betavec$, where the inequality is taken element-wise.
\end{definition}
We can enumerate all feasible constraint patterns for \divkmedian, \divkmeans, or \divksupplier in time $\bigO(2^{tk}t |U|)$, as asserted by the following lemma.

\begin{lemma}
\label{lemma:feasiblecp}
For a given instance of \divkmedian, \divkmeans, or \divksupplier, we can enumerate all $k$-multisets with feasible constraint pattern in time $\bigO(2^{tk}t |U|)$.
\end{lemma}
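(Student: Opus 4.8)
The plan is to directly enumerate the relevant objects and count the work. First I would recall the key structural fact established just before the lemma: the characteristic vectors of facilities take values in $\{0,1\}^t$, so there are at most $2^t$ distinct characteristic vectors, and these induce the partition $\charpart = \{E(\gammavec)\}_{\gammavec \in \{0,1\}^t}$ of $F$ into at most $2^t$ parts. Computing $\charvec_f$ for every $f \in F$ and bucketing the facilities into the parts $E(\gammavec)$ takes time $\bigO(t|F|) = \bigO(t|U|)$ (a single pass over the facilities, writing down a length-$t$ vector for each). Crucially, a $k$-multiset $\Epcal = \{E(\gammavec_i)\}_{i \in [k]}$ with feasible constraint pattern is determined entirely by which parts $E(\gammavec)$ it uses and with what multiplicity — that is, by a multiset of size $k$ drawn from the $\le 2^t$ distinct vectors in $\{0,1\}^t$ that actually occur as characteristic vectors.

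Next I would bound the number of such $k$-multisets. A $k$-multiset over a ground set of size at most $2^t$ is in bijection with a function assigning a nonnegative multiplicity to each of the $\le 2^t$ vectors with multiplicities summing to $k$; the number of these is $\binom{2^t + k - 1}{k} \le (2^t)^k = 2^{tk}$ (alternatively, and even more crudely, each of the $k$ slots of the multiset picks one of at most $2^t$ vectors, giving $\le 2^{tk}$ ordered choices and hence at most that many multisets). So the enumeration loop runs over at most $2^{tk}$ candidate multisets. For each candidate multiset we must (i) generate it and (ii) test feasibility, i.e. check whether the element-wise sum $\sum_{i \in [k]} \gammavec_i$ is $\ge \vec r$ coordinate-wise. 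The sum of $k$ vectors in $\{0,1\}^t$ can be computed in $\bigO(tk)$ time, and the coordinate-wise comparison with $\vec r$ takes $\bigO(t)$; incremental generation of the multisets keeps the per-candidate cost at $\bigO(tk)$ or better. This gives a total of $\bigO(2^{tk} \cdot tk)$ for the enumeration-and-filter phase.

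Combining the two phases, the running time is $\bigO(t|U|) + \bigO(2^{tk} \cdot tk) = \bigO(2^{tk} t (|U| + k))$, and since $k \le |U|$ (we may assume $k \le |F| \le |U|$, as otherwise the instance is trivially infeasible or degenerate) this is $\bigO(2^{tk} t |U|)$, as claimed. I would also note that the output of the enumeration is naturally expressed in terms of the parts $E(\gammavec)$ rather than individual facilities, which is exactly the form needed for the subsequent algorithmic steps (where one then picks concrete facilities from each chosen part).

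I do not expect a genuine obstacle here; the statement is essentially a bookkeeping exercise. The one point requiring a little care is the distinction between multisets of \emph{parts} $E(\gammavec) \in \charpart$ and multisets of \emph{vectors} $\gammavec \in \{0,1\}^t$ — since distinct parts have distinct characteristic vectors, these are in bijection, so the $2^t$ bound on the number of parts (and hence $2^{tk}$ on the number of $k$-multisets) is correct. A second minor subtlety is making sure the feasibility test is done without materializing anything larger than a length-$t$ vector, so that the $t$ factor (not $t^2$ or worse) in the final bound is justified; incremental computation of the running coordinate-wise sum as the multiset is built handles this cleanly.
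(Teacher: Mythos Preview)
Your proposal is correct and follows essentially the same approach as the paper: build the partition $\charpart$ (at most $2^t$ parts), enumerate the $\binom{|\charpart|+k-1}{k}\le 2^{tk}$ $k$-multisets, and test each for feasibility in $\bigO(tk)$ time. Your accounting is in fact slightly more careful than the paper's (you spell out the $\bigO(tk)$ feasibility check and the absorption of $k$ into $|U|$), but the argument is the same.
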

\begin{proof}
First, we construct the set $\charpart$ in time $\bigO(2^t|F|)$, since $|\charpart| \leq 2^t$. Then, we enumerate all possible $k$-multisets of $\charpart$ that have feasible constraint pattern. Since, there are $|\charpart|+k-1 \choose k$ possible $k$-multisets of $\charpart$, so enumerating all feasible {\pattern}s can be done in $\bigO(|\charpart|^k t)$ time, resulting in total running time $\bigO(2^{tk}t |U|)$.
\end{proof}
Facility groups with feasible constraint pattern $\Epcal=\{E(\gammavec)_i\}_{i \in [k]}$ might be duplicates. For instance, if it is necessary to choose two facilities from a group $E(\gammavec)$, we make a copy of facilities in $E(\gammavec)$ while introducing at most $\epsilon > 0$ distortion in distances, effectively forming a new group. Nonetheless, we consider every $\Epcal$ with feasible constraint pattern to be disjoint.

Observe that for every $k$-multiset $\Epcal=\{E(\gammavec_{i})\}_{i \in [k]}$ with a feasible \pattern, choosing any arbitrary facility from each $E(\gammavec_{i})$ yields a feasible solution to \reqsat.
More specifically, every feasible constraint pattern induces a \kmediankpm instance (\kmeanskpm, \ksupplierkpm, resp.).

\subsection{\fpt approximation algorithm for \divkmedian and \divkmeans}
\label{sec:algorithm:fpt-divkmedian}
In this section we present one of our main results. First, we give an intuitive overview of our algorithm for \divkmedian and \divkmeans. As a warm-up, we describe a simple $(3+\epsilon)$-\fpt-approximation algorithm for \divkmedian ($9+\epsilon$-approximation for \divkmedian resp.). Further, show how to obtain a better guarantee, leveraging the recent \fpt-approximation techniques of \kmedian and \kmeans.

\xhdr{Intuition}
Given an instance $\divkins$ of \divkmedian, we partition the facility set $F$ into at most $\bigO(2^t)$ subsets $\charpart = \{E(\gammavec)\}_{ \gammavec \in \{0,1\}^t }$, such that each subset $E(\gammavec)$ corresponds to the facilities with characteristic vector $\gammavec \in \{0,1\}^t$. Then, using Lemma~\ref{lemma:feasiblecp}, we enumerate all $k$-multisets of $\charpart$ with feasible \pattern. For each such $k$-multiset $\Epcal=\{E(\gammavec_{i})\}_{i \in [k]}$, we generate an instance $J_\Epcal=((U,d), \{E(\gammavec_{i})\}_{i \in [k]},C',k)$ of \kmedianpm, resulting in at most $\bigO(2^{tk})$ instances.
Next, using Theorem~\ref{theorem:coreset}, we build a coreset $C'\subseteq C$ of clients. Finally, we obtain an approximate solution to each \kmedianpm instance by adapting the techniques from Cohen-Addad et al.~\cite{cohen2019tight}.
For \divkmeans, the process is similar, except we create \kmeanskpm instance using $k$-multiset $\Epcal = \{E(\gammavec_i)\}_{i \in [k]}$ of feasible \pattern.

Consider  $\Epcal^*=\{E(\gammavec^*_{i})\}_{i \in [k]}$, a $k$-multiset of $\charpart$, corresponding to the optimal (feasible) constraint pattern $(\gammavec^*_i)_{i \in [k]}$. Then note that, after duplicating facilities to make elements in $\Epcal^*$ disjoint, $((U,d),\{E(\vec{\gamma}^*_{1}),\cdots,E(\vec{\gamma}^*_{k})\},C)$ is an instance of \kmediankpm, since we want to select at most $1$ facilities from each partition. However, since there are $k$ partitions, every feasible solution to \kmediankpm, must select exactly $1$ facility from each partition, as desired. 
In essence, using the \pattern, we have reduced \divkmedian to \kmediankpm.
Therefore, any approximation algorithm for \kmediankpm yields the same approximation factor for \divkmedian.

Let $\Epcal=\{E(\gammavec_{i})\}_{i \in [k]}$ be a $k$-multiset of $\charpart$ corresponding to a feasible constraint pattern $(\gammavec_i)_{i \in [k]}$, and let $J_\Epcal$ be the corresponding feasible \kmedianpm instance.
Let $\Ftilde^* = \{\ftilde_i^* \in E(\gammavec_{i})\}_{i \in [k]}$ be an optimal solution of $J_\Epcal$. For each $\ftilde^*_i$, let $\ctilde^*_i \in C'$ be a closest client, with $d(\ftilde^*_i,\ctilde^*_i) = \lambdatilde^*_i$. Next, for each $\ctilde^*_i$ and $\lambdatilde^*_i$, let $\Pitilde^*_i \subseteq E(\gammavec_{i})$ be the set of facilities $f\in E(\gammavec_{i})$ such that $d(f,\ctilde^*_i) = \lambdatilde^*_i$. Let us call $\ctilde^*_i$ and $\lambdatilde^*_i$ as the leader and radius of $\Pitilde^*_i$, respectively.
Observe that, for each $i \in [k]$,  $\Pitilde^*_i$ contains  $\ftilde^*_i$. Thus, if only we knew $\ctilde^*_i$ and $\lambdatilde^*_i$ for all $i \in [k]$, we would be able to obtain a provably good solution. See Figure~\ref{fig:mainfptapx} for an illustration.

To find the closest client $\ctilde^*_i$ and its corresponding distance $\lambdatilde^*_i$ in \fpt time, we employ techniques of Cohen-Addad et al.~\cite{cohen2019tight}, which they build on the work of Feldman and Langberg~\cite{feldman2011unified}. In our case, we use coreset construction by Cohen-Addad et al.~\cite{cohen2021new}.
The idea is to reduce the search spaces enough to allow brute-force search in \fpt time. 
First, note that we already have a client coreset $C'$ with $|C'| = \bigO(k\nu^{-2}\log |U|)$ from Theorem~\ref{theorem:coreset}. To find $\{\ctilde^*_i\}_{i \in [k]}$ we can enumerate all ordered $k$-multisets of $C'$ in $(\bigO(k\nu^{-2}\log |U|)^k$ time.
Then, to bound the search space of $\lambda^*_i$ (which is at most $\Delta = \poly(|U|)$), we discretize the interval $[1,\Delta]$ to $[[\Delta]_\eta]$, for some $\eta >0$. Note that $[\Delta]_\eta \le \lceil \log_{1+\eta} \Delta\rceil = \bigO(\log |U|/\eta)$. We thus enumerate all ordered $k$-multisets of $[[\Delta]_\eta]$ in at most $(\bigO(\log |U|/\eta))^k$ time. The total time to find $\ctilde^*_i$ and $\lambdatilde^*_i$ is thus $(\bigO(k^3\nu^{-3}\log^2 |U|))^{k}$, which is $\fpt(k)$.

Next, using the facilities in $\{\Pitilde^*_i\}_{i \in[k]}$, we find an approximate solution for the instance $J_\Epcal$. As a warm-up, we show in Lemma~\ref{lemma:threeapx} that picking exactly one facility from each $\Pitilde^*_i$ arbitrarily already gives a $(3+\epsilon)$ approximate solution. This fact follows, since $(U,d)$ is a metric space and satisfy triangle inequality.
In Lemma~\ref{lemma:partition} we obtain a better approximation ratio by modeling \kmedianpm as a problem of maximizing a monotone submodular function, relying on the ideas of Cohen-Addad et al.~\cite{cohen2019tight}. 

\begin{figure}
\centering
\scalebox{0.8}{\begin{tikzpicture}[scale=\tikzscale,every node/.style={scale=\tikzscale}]

\input{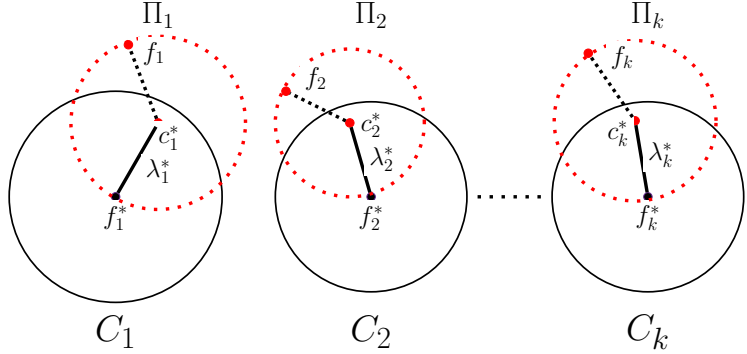}

\node[fcircle, minimum width=5cm] (c1) at (0,0) {};
\node[point1] (f1o) at (0,0) {};
\node[fill=white] at (0,-0.5) {\LARGE \bf $f_1^*$};
\node[fill=white] at (0,-3.2) {\Huge \bf $C_1$};

\node[fill=white] at (1,4.3) {\huge \bf $\smallball_1$};
\node[dcircle, minimum width=4.1cm] (c1) at (1,1.75) {};
\node[point2] (c1) at (1,1.75) {};
\node[fill=white] at (1.25,1.4) {\LARGE \bf $c_1^*$};
\node[point2] (f1) at (0.3,3.6) {};
\node[fill=white] at (0.9,3.4) {\LARGE \bf $f_1$};

\draw (f1o) edge[exedge] (c1);
\node[fill=white] at (1,0.6) {\LARGE \bf $\lambda_1^*$};
\draw (c1) edge[exedge, dotted] (f1);
\node[fcircle, minimum width=4.5cm] (c2) at (6,0) {};
\node[point1] (f2o) at (6,0) {};
\node[fill=white] at (6,-0.5) {\LARGE \bf $f_2^*$};
\node[fill=white] at (6,-3.2) {\Huge \bf $C_2$};

\node[fill=white] at (6,4.3) {\huge \bf $\smallball_2$};
\node[dcircle, minimum width=3.5cm] (c1) at (5.5,1.75) {};
\node[point2] (c2) at (5.5,1.75) {};
\node[fill=white] at (6,1.7) {\LARGE \bf $c_2^*$};
\node[point2] (f2) at (4.0,2.5) {};
\node[fill=white] at (4.7,2.8) {\LARGE \bf $f_2$};

\draw (f2o) edge[exedge] (c2);
\node[fill=white] at (6.25,0.9) {\LARGE \bf $\lambda_2^*$};
\draw (c2) edge[exedge, dotted] (f2);

\node[fcircle, minimum width=4.5cm] (c2) at (12.5,0) {};
\node[point1] (fko) at (12.5,0) {};
\node[fill=white] at (12.5,-0.5) {\LARGE \bf $f_k^*$};
\node[fill=white] at (12.5,-3.2) {\Huge \bf $C_k$};

\node[fill=white] at (12.5,4.3) {\huge \bf $\smallball_k$};
\node[dcircle, minimum width=3.8cm] (c1) at (12.2,1.8) {};
\node[point2] (ck) at (12.2,1.8) {};
\node[fill=white] at (11.8,1.5) {\LARGE \bf $c_k^*$};
\node[point2] (fk) at (11.1,3.4) {};
\node[fill=white] at (11.9,3.3) {\LARGE \bf $f_k$};

\draw (fko) edge[exedge] (ck);
\node[fill=white] at (12.8,1.0) {\LARGE \bf $\lambda_k^*$};
\draw (ck) edge[exedge, dotted] (fk);

\draw[loosely dotted, ultra thick, draw=black] (8.5,0) -- (10,0);
\end{tikzpicture}
\caption{An illustration of facility selection for the \fpt algorithm for solving
\kmedianpm instance.
}
\label{fig:mainfptapx}
\end{figure}

\begin{lemma} 
\label{lemma:threeapx}
There exists a randomized $(3+\epsilon)$-approximation algorithm for \divkmedian, for every $\epsilon>0$,  which runs in time $f(k,t,\epsilon) \cdot \poly(|U|)$, where $f(k,t,\epsilon)=(\bigO(2^t \epsilon^{-3} k^3 \log^2 k))^k$.
\end{lemma}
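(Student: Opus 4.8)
The plan is to combine the enumeration of feasible constraint patterns (Lemma~\ref{lemma:feasiblecp}), the coreset construction (Theorem~\ref{theorem:coreset}), and a brute-force search over ``leader–radius'' pairs, then close the argument with a single application of the triangle inequality. First I would run the coreset algorithm of Theorem~\ref{theorem:coreset} with parameter $\nu$ (to be fixed as a small constant multiple of $\epsilon$), obtaining a weighted coreset $C' \subseteq C$ with $|C'| = \bigO(\nu^{-2} k \log |U|)$ such that the $k$-median cost of every size-$k$ facility set is preserved up to a $(1\pm\nu)$ factor; this holds with probability at least $1-\delta$. Next I would invoke Lemma~\ref{lemma:feasiblecp} to enumerate, in time $\bigO(2^{tk} t |U|)$, all $k$-multisets $\Epcal = \{E(\gammavec_i)\}_{i\in[k]}$ of the partition $\charpart$ whose constraint pattern is feasible; for each such $\Epcal$ it suffices to solve (approximately) the induced \kmedianpm instance $J_\Epcal = ((U,d),\{E(\gammavec_i)\}_{i\in[k]},C',k)$, since the optimal \divkmedian solution, restricted to the partition classes its facilities land in, induces one feasible $\Epcal$ whose optimum is at most the \divkmedian optimum, and conversely any feasible solution to any $J_\Epcal$ is feasible for \reqsat, hence for \divkmedian.

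The core step is the $(3+\epsilon)$-approximation for a fixed \kmedianpm instance $J_\Epcal$. Let $\Ftilde^* = \{\ftilde_i^* \in E(\gammavec_i)\}_{i\in[k]}$ be an optimal solution to $J_\Epcal$, let $\ctilde_i^* \in C'$ be a closest coreset client to $\ftilde_i^*$, and let $\lambdatilde_i^* = d(\ftilde_i^*,\ctilde_i^*)$. I would guess the ordered $k$-tuple $(\ctilde_1^*,\dots,\ctilde_k^*)$ by brute force over ordered $k$-multisets of $C'$, costing $\bigO((k\nu^{-2}\log|U|)^k)$, and guess each $\lambdatilde_i^*$ up to a $(1+\eta)$ factor by discretizing $[1,\Delta]$ with $\Delta = \poly(|U|)$ into $\bigO(\eta^{-1}\log|U|)$ geometric buckets, costing another $\bigO((\eta^{-1}\log|U|)^k)$ factor. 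Having correctly guessed $\ctilde_i^*$ and a value $\lambdabar_i \in [\lambdatilde_i^*,(1+\eta)\lambdatilde_i^*]$, I set $\Pitilde_i = \{f \in E(\gammavec_i) : d(f,\ctilde_i^*) \le \lambdabar_i\}$, which is nonempty (it contains $\ftilde_i^*$), and pick an arbitrary $f_i \in \Pitilde_i$. For any coreset client $c$, if $\ftilde_i^*$ is the facility of $\Ftilde^*$ serving $c$, then by the triangle inequality $d(c,f_i) \le d(c,\ftilde_i^*) + d(\ftilde_i^*,\ctilde_i^*) + d(\ctilde_i^*,f_i) \le d(c,\ftilde_i^*) + \lambdatilde_i^* + \lambdabar_i \le d(c,\ftilde_i^*) + (2+\eta)\lambdatilde_i^* \le (3+\eta)\,d(c,\ftilde_i^*)$, where the last inequality uses $\lambdatilde_i^* = d(\ftilde_i^*,\ctilde_i^*) \le d(\ftilde_i^*,c)$ since $\ctilde_i^*$ is the closest coreset client to $\ftilde_i^*$ and $c \in C'$. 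Summing with coreset weights gives $\sum_{c\in C'} w_c\, d(c,\{f_i\}) \le (3+\eta)\sum_{c\in C'} w_c\, d(c,\Ftilde^*)$, and transferring back through the coreset guarantee turns this into a $(3+\eta)\frac{1+\nu}{1-\nu}$ approximation of the true \divkmedian optimum; choosing $\eta$ and $\nu$ as sufficiently small constant multiples of $\epsilon$ makes this at most $3+\epsilon$.

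For the running time, I would multiply: the coreset step is $\bigO(|U|)$; the pattern enumeration is $\bigO(2^{tk} t|U|)$; and for each of the $\bigO(2^{tk})$ patterns the guessing costs $\bigO((k\nu^{-2}\log^2|U|)^k)$; collecting constants and writing $\nu,\eta = \Theta(\epsilon)$ yields overall time $f(k,t,\epsilon)\cdot\poly(|U|)$ with $f(k,t,\epsilon) = \bigO\big((2^t \epsilon^{-2} k^2 \log k)^k\big)$ — here the $\log^2|U|$ factors are absorbed into $\poly(|U|)$, and the slightly awkward-looking $k^2 \log k$ comes from bounding $k \cdot (\text{polylog in the guessing, rewritten})$; I expect the main obstacle to be bookkeeping the exact form of $f$, in particular justifying the $\log k$ and $k^2$ exponents cleanly rather than just $(k/\epsilon)^{\bigO(k)} 2^{tk}$, and making sure the failure probability $\delta$ of the coreset is folded into the ``randomized'' qualifier. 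The correctness direction — that some enumerated $\Epcal$ has optimum equal to the \divkmedian optimum, and that every produced solution is \reqsat-feasible hence \divkmedian-feasible — is routine from the definitions of constraint pattern and characteristic vector.
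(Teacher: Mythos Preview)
Your proposal is correct and follows essentially the same approach as the paper: enumerate feasible constraint patterns (Lemma~\ref{lemma:feasiblecp}), build the coreset (Theorem~\ref{theorem:coreset}), brute-force the leader--radius pairs, and use one triangle-inequality step to show that an arbitrary choice of one facility per $\Pi_i$ is a $3$-approximation. If anything, you are more explicit than the paper about the discretization error (the $(1+\eta)$ overshoot on $\lambdatilde_i^*$) and the coreset transfer; the paper's proof simply asserts these at the end.
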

\begin{proof} Let $I=\divkins$ be an instance of \divkmedian.
Let  $J=((U,d),C, \{E_1^*,\dots,E_k^*\},k)$ be an instance of \kmedianpm corresponding
to an optimal solution of $I$. That is, for some optimal solution $F^*=\{f_1^*, \dots,
f_k^*\}$ of $I$, we have $f_j^*\in E_j^*$. Let $c^*_j \in C'$ be the closest client to $f^*_j$, for $j \in [k]$, with $d(f^*_j,c^*_j) = \lambda_j$.
Now, consider the enumeration iteration where leader set is $\{c^*_j\}_{j \in [k]}$ and the radii is $\{\lambda_j^*\}$. The construction is illustrated in Figure~\ref{fig:mainfptapx}.
 
We define $\Pi^*_i$ to be the set of facilities in $E(\gammavec_i^*)$
at a distance of at most $\lambda_i^*$ from $c_i^*$. 
We will now argue that picking one arbitrary facility from each $\Pi^*_i$ gives a
$3$-approximation with respect to an optimal pick.
%
Let $C^*_j \subseteq C'$ be a set of clients assigned to each facility $f^*_j$
in optimal solution. Let $\{f_1,\dots,f_k\}$ be the  arbitrarily chosen
facilities, such that $f_j \in \Pi^*_j$. Then for any $c \in C_j$
\[
d(c,f_j) \leq d(c,f_j^*) + d(f_j^*,c_j^*) + d(c_j^*,f_j)).
\]

By the choice of $c_j^*$ we have 
$d(f_j^*,c_j^*) + d(c_j^*,f_j)) \leq 2 \lambda_j^* \leq 2 d(c,f_j^*)$, 
which implies
$\sum_{c \in C_j} d(c,f_j) \leq 3 \sum_{c \in C_j} d(c,f_j^*).$
By the properties of the coreset and bounded discretization
error~\cite{cohen2019tight}, we obtain the approximation stated in the lemma.
\end{proof}
Similarly for \divkmeans, by considering squared distances, we obtain a $(9+\epsilon)$-approximation algorithm with same time complexity.
\begin{corollary}
For every $\epsilon>0$, there exists a randomized $(9+\epsilon)$-approximation algorithm for \divkmeans which runs in time $f(k,t,\epsilon) \cdot \poly(|U|)$, where $f(k,t,\epsilon)=(\bigO(2^t \epsilon^{-3} k^3 \log^2 k))^k$.    
\end{corollary}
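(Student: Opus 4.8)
The plan is to mirror the proof of Lemma~\ref{lemma:threeapx} essentially verbatim, substituting squared distances for distances throughout, and to track exactly where the factor $3$ becomes $9$. First I would take an instance $I=\divkins$ of \divkmeans and build a $k$-means coreset $C'\subseteq C$ using Theorem~\ref{theorem:coreset} with $z=2$ (so $\Gamma=\min(\nu^{-2}+\nu^{-2},k\nu^{-2})$, which we again simplify to $2\nu^{-2}$, giving $|C'|=\bigO(\nu^{-2}k\log|U|)$). Then, by Lemma~\ref{lemma:feasiblecp}, I would enumerate the at most $\bigO(2^{tk})$ many $k$-multisets $\Epcal=\{E(\gammavec_i)\}_{i\in[k]}$ of $\charpart$ with feasible constraint pattern, forming for each the corresponding \kmeanskpm instance $J_\Epcal$; duplicating groups $E(\gammavec)$ whose requirement exceeds one, at the cost of a $(1+\epsilon')$ distortion, so that the multisets may be treated as disjoint. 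Since an optimal solution $F^*=\{f_1^*,\dots,f_k^*\}$ of $I$ (with $f_j^*\in E(\gammavec_j^*)$) witnesses one of these feasible constraint patterns, it suffices to solve each $J_\Epcal$ approximately and return the cheapest solution found.

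For the constraint pattern realized by $F^*$, I would reuse the construction of Lemma~\ref{lemma:threeapx}: let $c_j^*\in C'$ be the coreset client closest to $f_j^*$ with $\lambda_j^*=d(f_j^*,c_j^*)$, and let $\Pi_j^*\subseteq E(\gammavec_j^*)$ be the facilities within distance $\lambda_j^*$ of $c_j^*$, so $f_j^*\in\Pi_j^*$. In the enumeration iteration whose guessed leaders are $\{c_j^*\}_{j\in[k]}$ and whose guessed radii are the values $\{\lambda_j^*\}_{j\in[k]}$ rounded into the discretized grid $[[\Delta]_\eta]$, I pick an arbitrary $f_j\in\Pi_j^*$. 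For a client $c$ assigned to $f_j^*$ in the optimum, the triangle inequality gives $d(c,f_j)\le d(c,f_j^*)+d(f_j^*,c_j^*)+d(c_j^*,f_j)\le d(c,f_j^*)+2\lambda_j^*\le 3\,d(c,f_j^*)$, and squaring yields $d(c,f_j)^2\le 9\,d(c,f_j^*)^2$. Summing over all clients and invoking the $k$-means coreset inequality (which bounds the weighted sum of squared distances up to a $(1\pm\nu)$ factor) together with the bounded discretization error of Cohen-Addad et al.~\cite{cohen2019tight}, the returned cost is at most $(9+\epsilon)$ times the optimum once $\nu,\eta,\epsilon'$ are chosen as suitable multiples of $\epsilon$. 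The running time is dominated exactly as in Lemma~\ref{lemma:threeapx} by the $\bigO(2^{tk})$ constraint patterns times the $\bigO((k\nu^{-2}\log^2|U|)^k)$ guesses of leaders and radii, giving $f(k,t,\epsilon)\cdot\poly(|U|)$ with $f(k,t,\epsilon)=\bigO((2^t\epsilon^{-2}k^2\log k)^k)$.

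The only genuinely new point relative to the $k$-median case, and the step I would be most careful about, is that squared distances do not themselves satisfy a triangle inequality: the factor-$9$ bound must be obtained by first applying the triangle inequality to the underlying metric $d$ and only then squaring the resulting scalar inequality, rather than manipulating $d(\cdot,\cdot)^2$ directly. I would also verify that the $\epsilon'$-distortion introduced when duplicating a group with requirement larger than one still costs only a $(1+\epsilon)$ factor after squaring, i.e. that $(1+\epsilon')^2\le 1+\epsilon$ for an appropriate choice of $\epsilon'$, and that the coreset guarantee composes with both the factor-$9$ blowup and the discretization without further loss. Both are routine, so the corollary follows with the same time bound as Lemma~\ref{lemma:threeapx}.
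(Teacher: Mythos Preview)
Your proposal is correct and matches the paper's approach exactly: the paper's entire proof of this corollary is the single sentence ``Similarly for \divkmeans, by considering squared distances, we obtain a $(9+\epsilon)$-approximation algorithm with same time complexity,'' and you have faithfully unpacked what that sentence means, including the key observation that one applies the triangle inequality to $d$ first and only then squares to get $d(c,f_j)^2\le 9\,d(c,f_j^*)^2$.
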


\bigskip\noindent
Next, we focus on our main result, stated in Theorem~\ref{theorem:mainfptapx}. As mentioned before, we build upon the ideas for \kmedian of Cohen-Addad et al. of~\cite{cohen2019tight}. Their algorithm, however, does not apply directly to our setting, as we have to ensure that the chosen facilities should satisfy the constraints in $\alphavec$ and $\betavec$.
A key observation is that by relying on the partition-matroid constraint of the auxiliary submodular optimization problem, we can ensure that the output solution will satisfy the \pattern. Since at least one \pattern contains an optimal solution, we obtain the advertised approximation factor. The pseudocode of algorithms for \divkmedian and \kmediankpm is available in Algorithm~\ref{algo:divkmed} and \ref{algo:kmedpm}, respectively.

In the following lemma, we argue that this is indeed the case. Next, we will provide an analysis of the running time of the algorithm. This will complete the proof of Theorem~\ref{theorem:mainfptapx}.
\begin{theorem}[Algorithm for \divkmedian/\divkmeans]
\label{theorem:mainfptapx}
For every $\epsilon>0$, there exists a randomized $(1 + \frac{2}{e}+\epsilon)$-approximation algorithm for \divkmedian with running time $f(k,t,\epsilon) \cdot \poly(|U|,k,t)$, where $f(k,t,\epsilon) = \left(\bigO\left(\frac{k^3 \log^{2} k}{\epsilon^2 \cdot \log(1+\epsilon)}\right)\right)^k$.
For \divkmeans, with the same running time, we obtain a $(1 + \frac{8}{e} +\epsilon)$-approximation algorithm.
Assuming \GAPETH, the approximation ratios are tight for any $\fpt(\kt)$-algorithm.
\end{theorem}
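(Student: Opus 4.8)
The plan is to combine the enumeration of feasible constraint patterns (Lemma~\ref{lemma:feasiblecp}) with a reduction of each induced \kmediankpm instance to the maximization of a monotone submodular function subject to a partition-matroid constraint, mirroring the \kmedian algorithm of Cohen-Addad et al.~\cite{cohen2019tight} but carrying the matroid constraint through the analysis. First I would fix an optimal solution $F^* = \{f_1^*,\dots,f_k^*\}$ of the \divkmedian instance, and let $\charvec_{f_j^*}$ be the characteristic vector of $f_j^*$; the multiset $\Epcal^* = \{E(\charvec_{f_j^*})\}_{j\in[k]}$ is a $k$-multiset of $\charpart$ whose constraint pattern $\sum_j \charvec_{f_j^*} \geq \vec r$ is feasible, so $\Epcal^*$ is enumerated by Lemma~\ref{lemma:feasiblecp}. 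It therefore suffices to give, for a \emph{fixed} feasible $\Epcal$, an $\fpt(k)$ algorithm that $(1+\tfrac2e+\epsilon)$-approximates the optimal \kmediankpm solution supported on $\Epcal$: running it over all $\bigO(2^{tk})$ enumerated multisets and returning the cheapest output yields the claimed guarantee, since one of those multisets is $\Epcal^*$ and its optimal \kmediankpm cost is at most $\mathrm{OPT}$ of \divkmedian.

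For a fixed $\Epcal = \{E(\charvec_i)\}_{i\in[k]}$, I would first replace $C$ by a coreset $C'$ of size $\bigO(k\nu^{-2}\log|U|)$ via Theorem~\ref{theorem:coreset}, absorbing a $(1\pm\nu)$ distortion. Then, as sketched in the text before Lemma~\ref{lemma:threeapx}, I would guess the $k$ "leaders" $\ctilde^*_i \in C'$ (closest client to the $i$-th optimal facility) and the $k$ discretized radii $\lambdatilde^*_i \in [[\Delta]_\eta]$ by brute force over ordered $k$-multisets, which costs $\bigl((k\nu^{-2}\log^2|U|)\bigr)^k$, still $\fpt(k)$. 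This determines the candidate pools $\Pitilde^*_i \subseteq E(\charvec_i)$ (facilities within $\lambdatilde^*_i$ of $\ctilde^*_i$). On the correct guess, the true optimal facility $\ftilde^*_i$ lies in $\Pitilde^*_i$. Now, instead of picking one facility per pool arbitrarily (the warm-up $(3+\epsilon)$ bound of Lemma~\ref{lemma:threeapx}), I would set up the submodular-maximization relaxation of Cohen-Addad et al.: for a candidate center set $S$ define $f(S)$ as the total "saving" $\sum_{c\in C'} w_c\bigl(\rho_c - d(c,S)\bigr)^{+}$ relative to a fixed upper-bound assignment radius $\rho_c$ obtained from the leader/radius guesses, which is monotone and submodular; maximizing $f$ over the partition matroid $\{\Pitilde^*_i\}_{i\in[k]}$ (choose exactly one element from each block) and applying the $(1-\tfrac1e)$-approximation for monotone submodular maximization under a matroid constraint, together with the usual charging argument through the triangle inequality, yields a center set of \kmedian-cost at most $(1+\tfrac2e+O(\nu+\eta))\cdot\mathrm{OPT}(\Epcal)$. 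The matroid structure is exactly what forces the output to take one facility from each $\Pitilde^*_i \subseteq E(\charvec_i)$, hence to realize the constraint pattern of $\Epcal$ and so be feasible for \divkmedian. Choosing $\nu,\eta$ as suitable constant multiples of $\epsilon$ gives the stated ratio; the \divkmeans bound $1+\tfrac8e+\epsilon$ follows by the same argument with squared distances, using the coreset and the approximate-triangle-inequality for squares (an extra factor $2$ in the relevant constants). Collecting the $\bigO(2^{tk})$ enumeration factor, the $(k\nu^{-2}\log^2|U|)^k$ guessing factor, and the polynomial cost of one submodular-maximization call gives the running time $f(k,t,\epsilon)\cdot\poly(|U|,k,t)$ with $f(k,t,\epsilon) = \bigO\bigl((2^t k^3 \log^2 k / (\epsilon^2\log(1+\epsilon)))^k\bigr)$.

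For the tightness claim, I would reduce from the hardness of parameterized approximation of \kmedian/\kmeans \wrt $k$ under \GAPETH established by Cohen-Addad et al.~\cite{cohen2019tight}: since \divkmedian with $t=1$ (a single group $G_1 = F$, $r_1 = k$) is exactly \kmedian, any $\fpt(\kt)$-algorithm for \divkmedian with ratio $1+\tfrac2e-\epsilon$ would be an $\fpt(k)$-algorithm for \kmedian with ratio below $1+\tfrac2e$, contradicting \GAPETH; likewise for \divkmeans and $1+\tfrac8e$. (The paper's abstract also mentions the $\fpt\neq\wtwo$ route, which comes from Proposition~\ref{proposition:hardness:wtwo}, but for the tight constant the \GAPETH reduction through $t=1$ is the relevant one.)

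\medskip
\noindent\textbf{Main obstacle.} The routine parts are the enumeration bookkeeping and the reduction to \kmedian-style hardness for $t=1$. The delicate step is transporting the Cohen-Addad et al. submodular-maximization analysis so that the partition-matroid constraint over the pools $\{\Pitilde^*_i\}$ is respected \emph{without} degrading the $(1-\tfrac1e)$ factor — i.e. verifying that their saving function remains monotone submodular when centers are restricted to one-per-block, that the $(1-\tfrac1e)$ matroid-constrained maximization still applies, and that the charging of unassigned clients goes through with only $O(\nu+\eta)$ loss; this, plus tracking that the feasibility (constraint-pattern) guarantee is automatic from the matroid, is where the real work lies.
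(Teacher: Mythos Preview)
Your proposal is correct and follows essentially the same approach as the paper: enumerate feasible constraint patterns (Lemma~\ref{lemma:feasiblecp}), build a coreset, brute-force the leaders and discretized radii, and on each guess solve a monotone submodular maximization over the partition matroid $\{\Pitilde^*_i\}_{i\in[k]}$ to get the $(1-1/e)$ factor, with tightness coming from the $t=1$ reduction to \kmedian/\kmeans and the \GAPETH lower bound of Cohen-Addad et al. Your ``saving'' function $\sum_{c} w_c(\rho_c - d(c,S))^{+}$ is exactly the paper's $\impr(S) = \cost(C',F') - \cost(C',F'\cup S)$ once one sets $\rho_c = d(c,F')$, where $F'$ is the set of fictitious facilities placed at distance $2\lambda_i^*$ from each $\Pi_i$; the paper makes this fictitious-facility construction explicit, which is what yields the bound $\cost(C',F') \le (3+2\eta)\cdot\mathrm{OPT}$ needed to convert the $(1-1/e)$ submodular guarantee into $(1+2/e+\epsilon)$, but otherwise the arguments coincide.
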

\begin{proof}
The hardness results follows from~\cite[Theorem~2]{cohen2019tight}, which rules out factor $(1 + \frac{2}{e}-\epsilon)$-approximation algorithm for \kmedian and $(1 + \frac{8}{e} -\epsilon)$-approximation algorithm for \kmeans running in $\fpt(k)$ time, assuming \GAPETH. Therefore, any approximation algorithm achieving aforementioned factors for \divkmedian and \divkmeans running in time $\fpt(\kt)$ also yields approximation algorithm for \kmedian and \kmeans with the same factors running in time $\fpt(k+1) = \fpt(k)$, since $t=1$ for these problems.\footnote{In fact, \cite[Theorem~2]{cohen2019tight} rules out much stronger running times. Further, using the same assumption, Manurangsi~\cite{manurangsi2020tight} strengthens the running time lower bound and rules out $g(k)|U|^{o(k)}$ time, for any function $g$, for \kmedian and \kmeans. This implies no $g(k,t)|U|^{o(k)}$ time approximation algorithm can achieve better factors than that of Theorem~\ref{theorem:mainfptapx} for \divkmedian and \divkmeans. Note that, these problems can be exactly solved in time $|U|^{k+O(1)}$ time.}

Our algorithm is described in detail in Algorithm~\ref{algo:divkmed}. We primarily focus on \divkmedian, indicating the parts of the proof for \divkmeans. In essence, to achieve the results for \divkmeans, we need to consider squared distances which results in the claimed approximation ratio with same runtime bounds. The proof of Lemma~\ref{lemma:partition} will complete the proof of Theorem~\ref{theorem:mainfptapx}.

As mentioned before, to get a better approximation factor, the idea is to reduce the problem of finding an optimal solution to \kmediankpm to the problem of maximizing a monotone submodular function. 
To this end, for each $S \subseteq F$, we define the submodular function $\impr(S)$ that, in a way, captures the cost of selecting $S$ as our solution. To define the function $\impr$, we add a fictitious facility $f'_i$, for each $i \in [k]$ such that $f'_i$ is at a distance $2\lambda^*_i$ for each facility in $\Pi_i$. We, then, use the triangle inequality to compute the distance of $f'_i$ to all other nodes. Then, using an $(1-1/e)$-approximation algorithm (Line~12), we approximate $\impr$. Finally, we return the set that has the minimum \kmedian cost over all iterations.

\spara{Running time analysis.} First we bound the running time of Algorithm~\ref{algo:kmedpm}. Note that, the runtime of  Algorithm~\ref{algo:kmedpm} is dominated by the two \textit{foreach} loops (Line~1 and 6), since the remaining steps, including finding an approximate solution to the submodular function \impr, run in time $\poly(|U|)$. The \textit{for} loop of clients (Line~2) takes time $ (\bigO(k\nu^{-2} \log |U|))^k$. Similarly, the \textit{for} loop of discretized distances (Line~3) takes time $([\Delta]_\eta)^k = (\bigO(\log_{1+\eta} |U|))^k$, since $\Delta = \poly(|U|)$. Hence, setting $\eta = \Theta(\epsilon)$, the overall running time of Algorithm~\ref{algo:kmedpm} is bounded by\footnote{We use the fact that, if $k \le \frac{\log |U|}{\log\log |U|}$, then $(k \log^2 |U|)^k = \bigO(k^k \poly(n))$, otherwise if $k \ge \frac{\log |U|}{\log \log |U|}$, then $(k \log^2 n)^k = \bigO(k^k (k \log k)^{2k})$.}
\[
\left(\bigO\left(\frac{k \log^{2} |U|}{\epsilon^2 \cdot \log(1+\epsilon)}\right)\right)^k  \poly(|U|) = \left(\bigO\left(\frac{k^3 \log^{2} k}{\epsilon^2 \cdot \log(1+\epsilon)}\right)\right)^k  \poly(|U|) 
\]
Since Algorithm~\ref{algo:divkmed} invokes Algorithm~\ref{algo:kmedpm} $\bigO (2^{tk})$ times, its running time is bounded by
$
\bigO\left(\left( \frac{2^t k^3 \log^2 k}{\epsilon^2
	\log(1+\epsilon)}\right)^k \poly(|U|) \right)$.

\spara{Approximation guarantee.} We establish the approximation ratio in Lemma~\ref{lemma:partition}, which concludes the proof.
\end{proof}

\begin{algorithm}[h]
\caption{\sc Div-$k$-Median$(I=\divkins,\epsilon)$}
\label{algo:divkmed}
\KwIn{$I$, an instance of the \divkmedian problem\\
\Indp \Indp ~$\epsilon$, a real number}

\KwOut{$T^*$, subset of facilities}

\ForEach{$\gammavec \in \{0,1\}^t$} {
    $E({\gammavec}) \gets \{f \in {F} : \gammavec = \vec{\chi}_{f}\}$
}
$\Pcal \gets \{E(\gammavec): {\gammavec} \in \{0,1\}^t \}$

$C' \leftarrow \textsc{coreset}((U,d),\pazocal{F},C,k, \nu \leftarrow \epsilon/16)$

$T^* \leftarrow \emptyset$\\
\ForEach{multiset $\{E(\gammavec_1),\cdots,E(\gammavec_k)\} \subseteq \Pcal$ of size $k$} {
  \If{$\alphavec \leq \sum_{i \in [k]} \gammavec_{i} \leq \betavec$, element-wise} {
    Duplicate facilities to make subsets in $\{E({\gammavec}_{1}), \dots,
E(\gammavec_{k})\}$ disjoint\\
    $T \gets
\textsc{$k$-Median-$k$-PM}((U,d),\{E(\gammavec_{1}),\cdots,E(\gammavec_{k})\},C',\epsilon/4)$\\
    \If{$\textsf{cost}(C',T) < \textsf{cost}(C',T^*)$}{
      $T^* \gets T$
    }
  }
}
\Return{$T^*$}
\end{algorithm}
\begin{algorithm}
\caption{\sc ${\kmediankpm(J=((U,d),\{E_1,\cdots,E_k\},C'), \epsilon')}$}
\label{algo:kmedpm}
\KwIn{$J$, an instance of the \kmediankpm problem\\
\Indp \Indp ~$\epsilon'$, a real number}

\KwOut{$S^*$, a subset of facilities}

$S^* \leftarrow \emptyset$,
$\eta \leftarrow \frac{e \epsilon'}{2}$\\
\ForEach{ordered multiset $\{c'_{i_1},\cdots, c'_{i_k}\} \subseteq C'$ of size $k$} {
  \ForEach{ordered multiset $\Lambda = \{\lambda_{i_1},\cdots, \lambda_{i_k}\}$ such that $ \lambda_{i_1}
\subseteq [[\Delta]_\eta]$} {
    \For{$j=1$ to $k$} {
      $\Pi_j \leftarrow \{f \in E'_j \mid d_D(f,c'_{i_j}) = \lambda_{i_j} \}$ \label{algo:kmedpm:pi}\\
      Add a fictitious facility $F'_j$ \\
      \For{$f \in \Pi_j$} {
        $d(F'_j,f) \gets 2\lambda_{i_j}$
      }
      \For{$f \notin \Pi_j$} {
        $d(F'_j,v) \gets 2\lambda_{i_j} + \min_{f \in \Pi_j} d(f,v)$
      }
    }
    \For{$S \subseteq F$, define $\textsf{improve}(S) := \textsf{cost}(C',F') - \textsf{cost}(C',F' \cup
S)$} {
      $S_{max} \leftarrow S \subseteq  \pazocal{F}$ that maximizes
      $\textsf{improve}(S)$ s.t. $|S \cap \Pi_j|=1, \forall j\in[k]$\\
      \If{$\textsf{cost}(C',S_{max}) < \textsf{cost}(C',S^*)$} {
        $S^* \leftarrow S$
      }
    }
  }
}
\Return $S^*$
\end{algorithm}

\begin{lemma} 
\label{lemma:partition}
Let $I= \divkins$ be an instance of \divkmedian to Algorithm~\ref{algo:divkmed} and $F^*=\{f_i^*\}_{i \in [k]}$ being an optimal solution of $I$. Let $J=((U,d), \{E_i^*\}_{i \in [k]},C',k)$ be an instance of \kmedianpm corresponding to optimal solution $F^*$, \ie, $f_i^* \in E^*_i, i \in [k]$.
On input $(J,\epsilon')$, Algorithm~\ref{algo:kmedpm} outputs a set $\Shat$ satisfying $\cost(\Shat) \leq (1+\frac{2}{e}+\epsilon)\cost(F^*)$. 
Similarly, for \divkmeans, $\cost(\Shat) \leq (1+\frac{8}{e}+\epsilon)\cost(F^*)$.
\end{lemma}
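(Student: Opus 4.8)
The plan is to follow the \kmedian FPT‑approximation of Cohen‑Addad et al.~\cite{cohen2019tight}, adapted to the partition‑matroid instance $J$, and to hide the two sources of loss — the coreset of Theorem~\ref{theorem:coreset} and the discretization of the radii — inside the additive $\epsilon$. Write $F^*=\{f_i^*\}_{i\in[k]}$ with $f_i^*\in E_i^*$, and for a coreset client $c\in C'$ let $\sigma(c)$ be the index of the facility of $F^*$ serving $c$, so $d(c,F^*)=d(c,f_{\sigma(c)}^*)$. First I would pin down the ``right'' enumeration branch: Algorithm~\ref{algo:kmedpm} enumerates all ordered $k$‑tuples of leaders from $C'$ and all ordered $k$‑tuples of discretized radii, so there is an iteration in which, for every $i$, the leader is the client $c_i^*\in C'$ closest to $f_i^*$ and the radius is the $(1+\eta)$‑rounding of $\lambda_i^*:=d(f_i^*,c_i^*)$. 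In that iteration the set $\Pi_i^*\subseteq E_i^*$ of facilities within distance $\lambda_i^*$ of $c_i^*$ contains $f_i^*$; since the algorithm reports the cheapest set over all branches, it suffices to prove the bound for this one.

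Next I would set up the submodular relaxation exactly as sketched before the lemma. Attach to each $i$ a fictitious facility $f_i'$ at distance $2\lambda_i^*$ from every facility of $\Pi_i^*$, extend $d$ to $f_i'$ by the induced shortest‑path metric (which does not distort the original distances, since $2\lambda_i^*\ge\tfrac12\operatorname{diam}\Pi_i^*$), and put $B=\{f_i'\}_{i\in[k]}$. Define
$$\impr(S)=\sum_{c\in C'} w_c\,\max\bigl\{0,\ d(c,B)-d(c,S)\bigr\},\qquad S\subseteq F,$$
which is monotone and submodular, being a nonnegative combination of truncated ``$\max$‑coverage'' functions. Maximizing $\impr$ (Line~12) over the partition matroid whose parts are $\Pi_1^*,\dots,\Pi_k^*$ returns a basis $\hat S$, i.e.\ a transversal choosing exactly one facility of each $\Pi_i^*$, with $\impr(\hat S)\ge(1-\tfrac1e)\,\impr(F^*)$ because $F^*$ is itself a basis of this matroid.

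The heart of the proof is then three observations, all using that any two facilities of $\Pi_i^*$ lie within $2\lambda_i^*$ of each other (both are within $\lambda_i^*$ of $c_i^*$), whence $d(c,f_i')=d(c,\Pi_i^*)+2\lambda_i^*$ for every $c$. (a)~For any $c$ and $j$, the facility of $\hat S$ in $\Pi_j^*$ is within $2\lambda_j^*$ of the $\Pi_j^*$‑closest facility to $c$, so $d(c,\hat S)\le d(c,\hat S\cap\Pi_j^*)\le d(c,\Pi_j^*)+2\lambda_j^*=d(c,f_j')$; minimizing over $j$ gives $d(c,\hat S)\le d(c,B)$, hence $\cost(\hat S)=\sum_c w_c d(c,\hat S)=\cost(B)-\impr(\hat S)$ with $\cost(B):=\sum_c w_c d(c,B)$. (b)~The same estimate applied to the transversal $F^*$ gives $d(c,F^*)\le d(c,B)$, so $\impr(F^*)=\cost(B)-\cost(F^*)$. (c)~Since $c_{\sigma(c)}^*$ is the $C'$‑closest client to $f_{\sigma(c)}^*$, $\lambda_{\sigma(c)}^*\le d(c,f_{\sigma(c)}^*)$, so $d(c,B)\le d(c,f_{\sigma(c)}')\le 2\lambda_{\sigma(c)}^*+d(c,f_{\sigma(c)}^*)\le 3\,d(c,F^*)$, i.e.\ $\cost(B)\le 3\,\cost(F^*)$. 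Combining, $\cost(\hat S)=\cost(B)-\impr(\hat S)\le\cost(B)-(1-\tfrac1e)(\cost(B)-\cost(F^*))=\tfrac1e\cost(B)+(1-\tfrac1e)\cost(F^*)\le(1+\tfrac2e)\cost(F^*)$. Feasibility w.r.t.\ $\vec r$ is automatic, since $\hat S$ picks one facility of characteristic vector $\vec\gamma_i^*$ for each $i$, so the characteristic vectors of $\hat S$ sum to the feasible pattern $\sum_i\vec\gamma_i^*\ge\vec r$; and the $(1\pm\nu)$ coreset distortion together with the $(1+\eta)$ inflation from rounding the $\lambda_i^*$ turns $1+\tfrac2e$ into $1+\tfrac2e+\epsilon$ once $\nu,\eta=\Theta(\epsilon)$. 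For \divkmeans one repeats the whole argument with squared distances: the triangle inequality is used only as $d(c,f)^2\le(1+\delta)d(c,g)^2+(1+\delta^{-1})d(g,f)^2$, the fictitious baseline becomes $(3+O(\delta))^2$‑approximate — hence $9$‑approximate in the limit — and the same computation yields $\tfrac1e\cdot 9+(1-\tfrac1e)=1+\tfrac8e$, with a single $\epsilon$ absorbing $\delta,\nu,\eta$.

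I expect observation~(a) to be the main obstacle: the fictitious facility $f_i'$ must be positioned so that the real facility the submodular maximizer selects in $\Pi_i^*$ can always replace $f_i'$ without increasing any client's cost — this is what makes $\impr$ track $\cost(B)-\cost(\hat S)$ exactly, and it is why $f_i'$ sits at distance $2\lambda_i^*$ from \emph{all} of $\Pi_i^*$ and why $\hat S$ must be a full transversal (a matroid basis, which we may assume by monotonicity). The same step is where the $k$‑means variant pays for the approximate triangle inequality and the constants must be retuned. By contrast, the only ingredient genuinely new relative to~\cite{cohen2019tight} — replacing the cardinality constraint by the partition matroid on $\{\Pi_i^*\}$ — costs nothing, since $(1-\tfrac1e)$ monotone submodular maximization holds under any matroid constraint and a matroid basis automatically respects the constraint pattern.
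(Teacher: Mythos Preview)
Your proposal is correct and follows essentially the same route as the paper: fix the enumeration branch with the true leaders $c_i^*$ and (rounded) radii $\lambda_i^*$, introduce the fictitious baseline $B=\{f_i'\}$ at distance $2\lambda_i^*$ from $\Pi_i^*$, reduce to monotone submodular maximization of $\impr$ over the partition matroid on $\{\Pi_i^*\}$, and combine the $(1-1/e)$ matroid guarantee with the bound $\cost(B)\le(3+2\eta)\cost(F^*)$ to obtain $1+\tfrac{2}{e}+\epsilon$ (and $1+\tfrac{8}{e}+\epsilon$ for squared distances). Your identification of observation~(a) --- that any transversal of the $\Pi_i^*$ dominates the fictitious baseline pointwise, so $\cost(\hat S)=\cost(B)-\impr(\hat S)$ exactly --- as the crux is spot on, and your remark that the passage from a cardinality to a partition-matroid constraint is free given~\cite{calinescu2011maximizing} is precisely the paper's observation.
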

\begin{proof}
Consider the iteration of Algorithm~\ref{algo:kmedpm} where the chosen clients and radii are optimal, that is, $\lambda_i^*=d(c_i^*,f_i^*)$ and this distance is minimal over all clients served by $f_i^*$ in the optimal solution.
Assuming the input described in the statement of the lemma, it is clear that in this iteration we have $f_i^* \in \Pi_i$ (see Algorithm~\ref{algo:kmedpm}, line~\ref{algo:kmedpm:pi}).
Furthermore, given the partition-matroid constraint imposed on it, the proposed
submodular optimization scheme is guaranteed to pick exactly one facility from
each of $\Pi_i$, for all $i$.

On the other hand, known results for submodular optimization show that this problem can be efficiently approximated within a factor $\left(1-1/e\right)$ of the optimum~\cite{calinescu2011maximizing}. This translates to a $(1+\frac{2}{e}+\epsilon)$-approximation ($1+\frac{8}{e}+\epsilon$ resp.) of the optimal choice of facilities, one from each of $\Pi_i$~\cite{cohen2019tight}. For complete calculations, see Appendix~\ref{app:fptapx}.
\end{proof}
Note that, Lemma~\ref{lemma:partition} results in $(1+\frac{2}{e}+\epsilon)$ and $(1 + \frac{8}{e}+\epsilon)$-approximation algorithms for \divkmedian and \divkmeans with disjoint facility groups, with running time $(\bigO( {k^3 \log^2 k}{\epsilon^{-2} \log(1+\epsilon)^{-1}}))^k \poly(|U|)$. Notably, it reduces the running time by a factor of $2^t$ as compared to corresponding problems with intersecting facility groups. 
Attaining this result involves transforming \divkmedian with disjoint facility groups into an instance of \kmediankpm, by duplicating the facility group $G_i \in \Gcal$  exactly $\alpha_i$ times and some additional prepossessing. This process effectively generates dummy facility groups. We remark that Algorithm~\ref{algo:kmedpm} can be extended to obtain similar results for \textsc{$k$-Median-$p$-Partition} and \textsc{$k$-Means-$p$-Partition}.

\begin{corollary}[Algorithm for \kmediankpm/\kmeanskpm] \label{cor:kmediankpm}
There exists a randomized $(1 + \frac{2}{e}+\epsilon)$-approximation algorithm for \kmediankpm, for any $\epsilon > 0$,  with running time $f(k,\epsilon) \cdot \poly(|U|,k)$, where $f(k,\epsilon) = \left(\bigO\left( \frac{k^2 \log^2 k}{\epsilon^2 \log(1+\epsilon)}\right) \right)^k$.
For \kmeanskpm, with the same running time a $(1 + \frac{8}{e} +\epsilon)$-approximation algorithm is achieved.
%
\end{corollary}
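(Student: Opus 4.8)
The plan is to observe that \kmediankpm is literally the subproblem that Algorithm~\ref{algo:kmedpm} already solves, so the corollary follows by stripping away the outer enumeration over constraint patterns and the requirement-satisfiability bookkeeping from the proof of Theorem~\ref{theorem:mainfptapx}. Concretely, given an instance $((U,d),\{E_1,\dots,E_p\},C,k)$ of \kmediankpm, I would first invoke Theorem~\ref{theorem:coreset} to build a coreset $C'\subseteq C$ of size $\bigO(\nu^{-2}k\log|U|)$ with $\nu=\Theta(\epsilon)$; since the $k$-median objective is unchanged, $C'$ is a valid coreset for the matroid-constrained problem as well. Then I would run Algorithm~\ref{algo:kmedpm} directly on $(((U,d),\{E_1,\dots,E_p\},C'),\epsilon')$ with $\epsilon'=\Theta(\epsilon)$. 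The only adjustment is that here the partition matroid has $p$ parts rather than exactly $k$ "slots," so one enumerates ordered $k$-multisets of parts (equivalently, guesses which $k$ of the $p$ groups the optimal solution touches, with multiplicity one per group) as an additional outer loop of cost $\binom{p}{k}\le p^k$; alternatively, if $p$ is treated as part of the input and not a parameter, one can fold this into the analysis exactly as the constraint-pattern enumeration was folded in before.

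The correctness argument is identical to Lemma~\ref{lemma:partition}: in the enumeration iteration where the guessed leaders $\{c_i^*\}_{i\in[k]}$ and radii $\{\lambda_i^*\}_{i\in[k]}$ match an optimal solution $F^*=\{f_i^*\}_{i\in[k]}$ with $f_i^*\in E_i$ and $\lambda_i^*=\min_{c\ \text{served by}\ f_i^*} d(c_i^*,f_i^*)$, we get $f_i^*\in\Pi_i$; the partition-matroid-constrained submodular maximization of $\textsf{improve}$ picks exactly one facility from each $\Pi_i$, and the $(1-1/e)$-approximation for monotone submodular maximization under a partition matroid~\cite{calinescu2011maximizing} translates, via the fictitious-facility argument of Cohen-Addad et al.~\cite{cohen2019tight}, into a $(1+\frac{2}{e}+\epsilon)$-approximation (respectively $(1+\frac{8}{e}+\epsilon)$ for \kmeanskpm, using squared distances). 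The coreset and discretization errors are absorbed into $\epsilon$ as in the proof of Theorem~\ref{theorem:mainfptapx}.

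For the running time, the client-multiset loop contributes $\bigO((k\nu^{-2}\log|U|)^k)$, the discretized-radius loop contributes $\bigO(\log_{1+\eta}^k|U|)$ with $\eta=\Theta(\epsilon)$, and everything else (the submodular maximization and the final cost comparisons) runs in $\poly(|U|)$; combining these exactly as in the footnote of Theorem~\ref{theorem:mainfptapx} gives $f(k,\epsilon)\cdot\poly(|U|,k)$ with $f(k,\epsilon)=\bigO\!\left(\left(\frac{k^2\log^2 k}{\epsilon^2\log(1+\epsilon)}\right)^k\right)$, which is precisely the bound of Theorem~\ref{theorem:mainfptapx} without the $2^{tk}$ factor coming from the outer constraint-pattern enumeration. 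The main thing to be careful about — rather than a genuine obstacle — is making sure the $p$-partition matroid (at most one facility per part, total size exactly $k$) is handled correctly in the submodular step: one should verify that "at most one per $\Pi_i$ for $i\in[k]$" together with the guessed assignment of slots to distinct parts does yield a feasible independent set of the $p$-partition matroid of size $k$, which it does since the $k$ guessed parts are distinct. I expect this to be essentially a bookkeeping check, with all the real work already done in Theorem~\ref{theorem:mainfptapx} and Lemma~\ref{lemma:partition}.
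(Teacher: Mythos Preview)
Your proposal is correct and matches the paper's approach: the corollary follows directly from Lemma~\ref{lemma:partition} together with the running-time analysis of Algorithm~\ref{algo:kmedpm} given in the proof of Theorem~\ref{theorem:mainfptapx}, simply by omitting the outer $\bigO(2^{tk})$ enumeration over constraint patterns. One small clarification: \kmediankpm is precisely the case $p=k$, so a feasible solution of size $k$ must take exactly one facility from each of the $k$ parts, and Algorithm~\ref{algo:kmedpm} handles this directly; your extra outer loop guessing which $k$ of the $p$ parts are used is therefore unnecessary for the corollary as stated (it would be the right move for the more general \kmedianppm, which the paper only mentions in passing).
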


\begin{remark}
To extend our algorithmic  approach to fair $k$-median (fair $k$-means resp.) with (disjoint groups and) lower bound constraints, we duplicate the data points in group $G_i$ $\alpha_i$-times.\footnote{This will turn the metric into a psuedometric. Although, our algorithms are robust to this change, an alternative way to preserve the metricity is to slightly perturb the duplicate copies so that there is a non-zero but small appropriate distance between them.} This transformation creates an instance of \kmediankpm (\kmeanskpm resp.). Then, applying Corollary~\ref{cor:kmediankpm} yields a $\fpt(k)$-approximation algorithm with factors $1 + \frac{2}{e} + \epsilon$ for fair $k$-median and $1 + \frac{8}{e} + \epsilon$ for fair $k$-means with running time $f(k,\epsilon) \cdot \poly(|U|,k)$, where $f(k,\epsilon) = \left(\bigO\left( \frac{k^2 \log^2 k}{\epsilon^2 \log(1+\epsilon)}\right) \right)^k$. The approximation ratio is tight assuming \gapeth.
Extending these results to fair  $k$-median and fair  $k$-means with both lower and upper bound constraints remains an open problem.
\end{remark}

\subsection{\fpt approximation algorithm for \divksupplier} 
\label{sec:algorithm:fpt-divksupplier}
In our efforts to extend the coreset enumeration method described in Section~\ref{sec:algorithm:fpt-divkmedian} for addressing \divksupplier, we faced challenges. The primary obstacle arose from the lack of a known methodology for constructing coresets of an appropriate size for \ksupplier, say $f(k)\cdot \poly(\log n)$. Because of this limitation, we choose an alternative approach, circumventing coreset construction, and present a $5$-approximation algorithm for \divksupplier with time $\fpt(k+t)$.

An overview of our approximation algorithm for \divksupplier is as follows: we employ a reduction technique, akin to previous sections, to transform an instance $I$ of \divksupplier into $\bigO(2^{tk})$ instances of \ksupplierkpm in time $\bigO(2^{tk}t|U|)$. This reduction facilitates to identify instances whose feasible solutions satisfy the lower bound requirements of $I$. We guarantee that there is at least one feasible instance of \ksupplierkpm whose optimum cost is at most the optimum cost of $I$.
Next, we find a $5$-approximate solution in polynomial time for every instance of \ksupplierkpm using the $5$-approximation algorithm of Chen et al.~\cite{chen2024an} for \fairksupplier, which is a generalization of \ksupplierkpm. By combining two steps yields a $5$-approximation algorithm for \divksupplier with time $\fpt(k+t)$.

\begin{algorithm}[h]
\caption{\divksupplier $(I=\divkins)$}
\label{algo:divksupplier}
\KwIn{$I$, an instance of the \divksupplier problem}

\KwOut{$T^*$, subset of feasible facilities that is a $5$ approximation to $I$}

\ForEach{$\gammavec \in \{0,1\}^t$} {
    $E({\gammavec}) \gets \{f \in {F} : \gammavec = \vec{\chi}_{f}\}$
}
$\Pcal \gets \{E({\gammavec}): \gammavec \in \{0,1\}^t \}$


$T^* \leftarrow \emptyset$\\
\ForEach{multiset $\{E(\gammavec_{1}),\cdots,E(\gammavec_{k})\} \subseteq \Pcal$ of size $k$} {
  \If{$ \alphavec \leq \sum_{i \in [k]}\gammavec_{i} \geq \betavec $, element-wise} {
    Duplicate facilities to make subsets in $\{E(\gammavec_{1}), \dots,
E(\gammavec_{k})\}$ disjoint\\
    $T \gets 5$-approximation to  $((U,d),\{E(\gammavec_{1}),\cdots,E(\gammavec_{k})\},C)$ using the algorithm of Chen et al.~\cite{chen2024an} \\
    \If{$\textsf{cost}(C,T) < \textsf{cost}(C,T^*)$}{
      $T^* \gets T$
    }
  }
}
\Return{$T^*$}
\end{algorithm}

\begin{theorem}[Algorithm for \divksupplier]\label{thm:maindivksupplier}
There exists a $5$-approximation algorithm  for \divksupplier in time $2^{\bigO(tk)} \cdot \poly(|U|)$. Moreover, assuming $\fpt \ne W[2]$, there exists no $\fpt(k+t)$ algorithm that computes $(3-\epsilon)$-approximation, for any $\epsilon > 0$,  for \divksupplier.
\end{theorem}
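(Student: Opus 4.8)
\medskip\noindent\emph{Proof plan.}
The statement has two halves, and I would treat them independently. For the algorithmic upper bound the plan is to instantiate Algorithm~\ref{algo:divksupplier}: combine the pattern enumeration of Lemma~\ref{lemma:feasiblecp} with the polynomial-time $3$-approximation for \ksupplierkpm of Theorem~\ref{thm:mainkcenkpm}. Concretely, first enumerate, in time $\bigO(2^{tk}t|U|)$, all $k$-multisets $\Epcal=\{E(\gammavec_i)\}_{i\in[k]}$ of the partition $\charpart$ whose constraint pattern $\sum_{i\in[k]}\gammavec_i$ dominates $\rvec$ coordinate-wise; there are at most $\bigO(2^{tk})$ of them. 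For each such $\Epcal$, duplicate the repeated blocks so the $k$ sets become pairwise disjoint — the only step that incurs a distortion, which is at most an $\epsilon$-distortion of the metric as described in Section~\ref{sec:algorithm:enumerating} — thereby producing a \ksupplierkpm instance $J_\Epcal$, and run the algorithm of Theorem~\ref{thm:mainkcenkpm} on $J_\Epcal$ to obtain a solution $T_\Epcal$ with $\phi(C,T_\Epcal)\le 3\,\mathrm{OPT}(J_\Epcal)$ in time $\bigO(k^{2+o(1)}|U|^2\log|U|)$. The output is the $T_\Epcal$ of minimum \ksupplier cost. Multiplying the $\bigO(2^{tk})$ iterations by the per-iteration cost (plus the $\poly(|U|,k)$ duplication overhead) yields running time $2^{tk}k^{2+o(1)}\cdot\poly(|U|,t,k)$, matching $f(k,t)=2^{tk}k^{2+o(1)}$.

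\medskip\noindent
For correctness I would verify two claims. First, the returned set is feasible for $I$: it is obtained by picking exactly one facility from each block of some $\Epcal$ with feasible constraint pattern, so the number of chosen facilities lying in a group $G_j$ equals $\sum_{i\in[k]}(\gammavec_i)_j\ge r_j$. Second, some enumerated instance captures the optimum cheaply: taking an optimal solution $F^*=\{f_1^*,\dots,f_k^*\}$ of $I$ and setting $\gammavec_i^{\star}=\charvec_{f_i^*}$, the multiset $\Epcal^{\star}=\{E(\gammavec_i^{\star})\}_{i\in[k]}$ has a constraint pattern whose $j$-th coordinate equals $|F^*\cap G_j|\ge r_j$, hence $\Epcal^{\star}$ is among the enumerated feasible multisets; moreover $F^*$, with $f_i^*$ assigned to the $i$-th (possibly duplicated) block, is a feasible solution of $J_{\Epcal^{\star}}$, so $\mathrm{OPT}(J_{\Epcal^{\star}})\le\phi(C,F^*)=\mathrm{OPT}(I)$ up to the metric distortion. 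Combining, the best $T_\Epcal$ costs at most $3\,\mathrm{OPT}(J_{\Epcal^{\star}})\le(3+\epsilon)\,\mathrm{OPT}(I)$ after rescaling $\epsilon$ to absorb the distortion.

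\medskip\noindent
For the conditional lower bound I would specialise to $t=1$, where \divksupplier coincides with \ksupplier (the single group is all of $F$ and its requirement $r_1\le k$ is vacuous). An $\fpt(k+t)$-time $(3-\epsilon)$-approximation for \divksupplier would then be an $\fpt(k+1)=\fpt(k)$-time $(3-\epsilon)$-approximation for \ksupplier, contradicting the result of Goyal and Jaiswal~\cite[Theorem~3]{goyal2023tight} that rules this out under $\fpt\ne\wtwo$. This is exactly the template used in the hardness half of Theorem~\ref{theorem:mainfptapx}.

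\medskip\noindent
The heavy lifting — the polynomial-time $3$-approximation for \ksupplierkpm and its tight analysis — is already carried out in Theorem~\ref{thm:mainkcenkpm}, so I do not expect a deeper obstacle here. What remains is essentially bookkeeping: making the block-duplication precise so that the $\epsilon$-distortion is genuinely absorbed into the final ratio, and checking that a partition-matroid-feasible solution maps back to a diversity-feasible one. The one subtlety worth stating carefully is the multiplicity handling in the enumeration step, since a block $E(\gammavec)$ may be required several times in a single multiset.
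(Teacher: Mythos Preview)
Your proposal is correct and follows essentially the same route as the paper: enumerate the $\bigO(2^{tk})$ feasible constraint patterns via Lemma~\ref{lemma:feasiblecp}, solve each resulting \ksupplierkpm instance with the $3$-approximation of Theorem~\ref{thm:mainkcenkpm}, and return the cheapest; the $\epsilon$ absorbs the block-duplication distortion exactly as you describe. For the lower bound the paper argues directly from the Hochbaum--Shmoys reduction of \textsc{Hitting Set} to \ksupplier (combined with the $W[2]$-hardness of \textsc{Hitting Set}) rather than quoting Goyal--Jaiswal, but this is only a difference in citation, not in substance.
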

\begin{proof}
The pseudocode of this algorithm is presented in Algorithm~\ref{algo:divksupplier}, whose proof is outlined as follows: mirroring the approach as detailed in Section~\ref{sec:algorithm:enumerating}, we reduce \divksupplier to $\bigO(2^{tk})$ instances of \ksupplierkpm. Let $I=\divkins$ be an instance of \divksupplier, and let $J=((U,d),F,\{E_1,\cdots,E_k\},C))$ denote an instance of \ksupplierkpm corresponding to an optimal solution of $I$.  Note that, we are required to select at most $1$ facility from each $E_i$ in $J$.
Leveraging the algorithm of Chen et al.~\cite{chen2024an}, we obtain a polynomial-time $5$-approximation for $J$. By selecting a solution that minimizes the objective over all $\bigO(2^{tk})$ instances, we can derive a $5$-approximation algorithm for \divksupplier. 

The hardness result follows due to Hochbaum et al.~\cite{hochbaum1986unified}, who showed that \ksupplier is \np-hard to approximate within a factor better than $3$ using a reduction from the hitting set ptoblem (\hittingset). Since \hittingset is $\wtwo$-hard, it implies that there is no $\fpt(k)$ algorithm that approximates \ksupplier better than factor $3$. Further, since \ksupplier is equivalent to \divksupplier when $t=1$, the claimed hardness result follows.\footnote{Assuming \GAPETH, for any function $g$, Goyal and Jaiswal~\cite{goyal2023tight} have shown that no approximation algorithm for \ksupplier with running time $g(k)|U|^{o(k)}$ achieves a factor better than $3$. This implies that, assuming \GAPETH, no approximation algorithm running in time $g(\kt)|U|^{o(k)}$ can achieve factor better than $3$ for \divksupplier. While \divksupplier can be solved exactly in time $|U|^{k+O(1)}$.}
\end{proof}


We summarize the computational complexity and algorithmic results in Table~\ref{table:summary}.

\begin{table}[t]
\footnotesize
\caption{\label{table:summary}A summary of computational complexity and
algorithmic results.
}
\centering
\begin{tabular}{l l l l l l l}
\toprule
\multirow{1}{*}{Objective} &
\multicolumn{3}{c}{Exact computation} &
\multicolumn{3}{c}{Approximation}\\
\cmidrule(l{0pt}r{5pt}){2-4} \cmidrule(l{5pt}r{5pt}){5-7}
 & $\poly(|U|,k,t)$ & $\fpt(k)$ & $\fpt(\kt)$ & $\poly(|U|,k,t)$ & $\fpt(k)$ & $\fpt(\kt)$\\
\midrule
& \multicolumn{6}{c}{Diversity-aware  clustering with lower and upper bound constraints} \\
\cmidrule(l{55pt}r{0pt}){1-7}
$k$-median & \np-hard & \wtwo-hard & \wtwo-hard & \np-hard & \wtwo-hard & $\approx 1 + 2 e^{-1}$\\
$k$-means & \np-hard & \wtwo-hard & \wtwo-hard & \np-hard & \wtwo-hard & $\approx 1 + 8 e^{-1}$\\
$k$-supplier & \np-hard & \wtwo-hard & \wtwo-hard & \np-hard & \wtwo-hard & $5$\\
\midrule
& \multicolumn{6}{c}{Fair clustering with lower bound constraints and disjoint groups} \\
\cmidrule(l{55pt}r{0pt}){1-7}
$k$-median & \np-hard~\cite{thejaswi2021diversity} & \wtwo-hard~\cite{thejaswi2021diversity} & \multicolumn{1}{c}{---} & $7.081$~\cite{thejaswi2021diversity} & $\approx 1 + 2 e ^{-1}$ & \multicolumn{1}{c}{---} \\
$k$-means  & \np-hard & \wtwo-hard & \multicolumn{1}{c}{---} & $64$~\cite{thejaswi2021diversity} & $\approx 1 + 8 e^{-1}$ & \multicolumn{1}{c}{---} \\
$k$-supplier & \np-hard & \wtwo-hard & \multicolumn{1}{c}{---} & $5$~\cite{chen2024approximation} & Open & \multicolumn{1}{c}{---} \\
\bottomrule
\end{tabular}
\end{table}
\section{Limitations, future work and open problems}
In this section we discuss limitations of our work, and propose directions for future work and open problems in diversity-aware fair clustering.

\xhdr{Limitations}
The {\fpt} algorithms presented for \divkmedian and \divkmeans obtain theoretically optimal approximation factors, matching the lower bounds. However, they may not scale to large datasets due to large exponential factors in their running time.
%
Furthermore, although, theoretically, coreset constructions should introduce minimal distortion in distances with a small $\gamma > 0$ factor, in practice, achieving smaller coresets often requires a larger $\gamma$, leading to larger approximation factors. Therefore, designing more practical algorithms for diversity-aware  clustering bypassing coreset construction and leader guessing framework remains an open challenge. We remark that for restricted metric spaces, such as {those with a} Euclidean metric, our approach is likely to achieve better approximation ratios and running times, since {the} Euclidean metric admits smaller coresets~\cite{bandyapadhyay2024coresets,huang2020coresets}.

\xhdr{Future directions}
In the quest for fair algorithmic solutions in unsupervised machine learning, problems like clustering are frequently revisited with various constraints to achieve different notions of fairness. While these solutions address immediate needs, their long-term impact and effectiveness remain uncertain. 
Thus, designing clustering algorithms that account for long-term fairness---potentially through online variants that consider fairness over time rather than immediately---remains an open challenge.
Although many algorithmic techniques address lower, upper, and exact requirements on cluster centers separately, techniques that solve one type of constraint often do not apply to others. Whether a reduction or simplification could enable these existing methods to handle both constraints---to solve diversity-aware (fair)  clustering---is still unknown. Resolving this would mark a significant advancement in designing fair algorithms for clustering problems.

\xhdr{Open problems}
For \fairkcenter and \fairksupplier (with disjoint groups and lower bound constraints), polynomial-time algorithms with approximation factors of $3$ and $5$ are known~\cite{jones2020fair,chen2024approximation}, while the known lower bound of approximation is $2$ and $3$, respectively. Bridging these gaps remain an important open problem. Extending this question, one could ask for an $\fpt(\kt)$ approximation for \divkcenter and \divksupplier with approximation factor $2$ and $3$, respectively, matching the lower bound of approximation.
\section{Conclusions}
\label{sec:conclustion} 

In this work, we investigate diversity-aware clustering problems where potential cluster centers are associated with multiple attributes, leading to intersecting groups. The goal is to select cluster centers to avoid both under and over representation by adhering to specified upper and lower bound thresholds for each group, while simultaneously minimizing the clustering objective---either $k$-median, $k$-means, or $k$-supplier. 
We analyze the computational complexity of these problems, establishing their \np-hardness and polynomial-time inapproximability. Furthermore, we prove that even allowing exponential running time in multiple parameters that {may be} small in practice, finding an optimal solution remains intractable.
%
Finally, we present parameterized approximation algorithms with ratios $1 + 2 e^{-1}$ and $1 + 8 e^{-1}$  for \divkmedian and \divkmeans, respectively, that essentially match the lower bound on the achievable approximation ratios. For \divksupplier we present a parameterized approximation algorithm with factor $5$.

\bigskip

\newpage
\bibliographystyle{ACM-Reference-Format}
\bibliography{paper}

\newpage
\appendix
\section{Parameterized complexity theory}\label{appendix:sec:parameterized}

In this section we define notation and terminology related to parameterized complexity. For terms and notations not defined here we refer the interested reader to check~\cite{downey2013fundamentals,cygan2015parameterized}.

\begin{definition}[Fixed-parameter tractability]
A problem $P$ specified by input $x$ and a parameter $k$ is {\em fixed-parameter tractable} (\fpt) if there exists an algorithm $A$ to solve every instance $(x,k) \in P$ with running time of the form $f(k) |x|^{\bigO(1)}$, where $f(k)$ is function depending solely on the parameter $k$ and $|x|^{\bigO(1)} = \mathrm{poly}(|x|)$ is a polynomial independent of the parameter $k$.\footnote{With
an exception that, $\log n$ is allowed as part of function $f(k)$ if $n \ll k$.}
A problem $P$ is {\em fixed-parameter intractable} otherwise if no algorithm with running time of the form $f(k)|x|^{\bigO(1)}$ exists to solve $P$.
\end{definition}

There exists a family of optimization problems for which no polynomial-time
approximation algorithms are known, a general belief is that these problems are
{\em inapproximable} in polynomial-time. However, we can try to design
approximation algorithms for these problem in \fpt time, which gives rise to
{\em fixed-parameter approximation} algorithms, which we formally define now.

\begin{definition}[{\bf Fixed-parameter approximation}]
An optimization problem $P$ where each input instance $x \in P$ is associated
with a corresponding optimal solution $s^*$ and parameters $\{k_1,\dots,k_q\}$
is {\em fixed-parameter approximable} with
respect to parameters $k_1,\dots,k_q$
if there exists an algorithm $\Acal$ that solves every instance 
$(x, k_1,\dots,k_q) \in P$ with running time of the form 
$f(k_1,\dots,k_q)|x|^{\bigO(1)}$ and returns a solution $s$ with cost 
at most $\Phi(s) \leq \alpha \cdot \Phi(s^*)$, $\alpha \geq 1$ for a minimization problem, and 
at least $\Phi(s) \geq \alpha \cdot \Phi(s^*)$, $\alpha \leq 1$ for a maximization problem.
Otherwise, $P$ is {\em fixed-parameter inapproximable} with respect to
parameters $k_1,\dots,k_q$.
\end{definition}

The complexity class \fpt consists of problems for which parameterized
algorithms are known. However, there are certain problems for which no known
parameterized algorithms exist for various natural parameters of the problem, leading to the
$W$-hierarchy. The $W$-hierarchy extends beyond \fpt and comprises a hierarchy
of complexity classes, \ie,
\[ \fpt \subseteq W[1] \subseteq W[2] \subseteq \dots\]

To establish that a problem belongs to a class in the $W$-hierarchy, it is
essential to show a parameterized reduction from a known problem in the
corresponding complexity class of the $W$-hierarchy. A parameterized reduction
is slightly different than a many-to-one reduction used to establish
\np-hardness and it is defined as follows.

\begin{definition}[{\bf Parameterized reduction}]
Let $P$ and $P'$ be two problems.
A {\em parameterized reduction} from $P$
to $P'$ is an algorithm $\mathcal{A}$ that transforms an instance $(x, k) \in P$ to an
instance $(x',k') \in P'$ such that:
($i$) $(x,k)$ is a {\sf yes} instance of $P$ if and only if $(x',k')$ is a {\sf yes} instance
of~$P'$;
($ii$) $k' \leq g(k)$ for some computable function $g$; and
($iii$) the running time of the transformation $\mathcal{A}$ is
$f(k)|x|^{\bigO(1)}$, for some computable function $f$.
Note that $f$ and $g$ need not be polynomial functions.
\end{definition}

For more details on parameterized complexity, we refer the reader to the book by
Downey and Fellows~\cite{downey2013fundamentals}, as well as the book by
Cygan et al.~\cite{cygan2015parameterized}.

\section{Proof of Theorem~\ref{theorem:mainfptapx}} \label{app:fptapx}

First, we present the proof of Lemma~\ref{lemma:partition} and continue to
Theorem~\ref{theorem:mainfptapx}. We primarily focus on
\divkmedian, indicating the parts of the proof for \divkmeans. In essence, to
achieve the results for \divkmeans, we need to consider squared distances which
results in the claimed approximation ratio with same runtime bounds.

As mentioned before, to get a better approximation factor, the idea is to reduce
the problem of finding an optimal solution to \kmedianpm to the problem of
maximizing a monotone submodular function. To this end, for each $S \subseteq
\pazocal{F}$, we define the submodular function $\impr(S)$ that, in a way,
captures the cost of selecting $S$ as our solution. To define the function
$\impr$, we add a fictitious facility $F'_i$, for each $i \in [k]$ such that
$F'_i$ is at a distance $2\lambda^*_i$ for each facility in $\Pi_i$. We, then,
use the triangle inequality to compute the distance of $F'_i$ to all other
nodes.
Then, using an $(1-1/e)$-approximation algorithm (Line~12), we approximate
$\impr$. Finally, we return the set that has the minimum \kmedian cost over all
iterations. 

\mpara{Correctness:}
Given $I=\divkins$ . Let $\mathcal{J} :=\{J_\ell\}$ be the instances of
\kmedianpm generated by Algorithm~\ref{algo:divkmed} at Line~9 (For simplicity,
we do not consider client coreset  $C'$ here). For correctness, we show that
$F^* = \{f^*_i\}_{i \in [k]} \subseteq \pazocal{F}$ is feasible to $I$ if and
only if there exists $J_{\ell^*} =  ((V,d),\{E^*_{1},\cdots, E^*_{k}\},C) \in
\mathcal{J}$ such that $F^*$ is feasible to $J_{\ell^*}$. This is sufficient,
since the objective function of both the problems is same, and hence returning
minimum over $\mathcal{J}$ of an optimal (approximate) solution obtains an
optimal (approximate resp.) solution to $I$.
We need the following proposition for the proof.

\begin{proposition} \label{cl:setfacvec}
For all $E \in \mathcal{E}$, $f \in E$, we have $\vec{\chi}_f = \vec{\chi}_E$.
\end{proposition}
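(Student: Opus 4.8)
This proposition is essentially definitional, so the plan is simply to unwind the relevant definitions. Recall that $\charpart = \mathcal{E} = \{E(\vec{\gamma})\}_{\vec{\gamma} \in \{0,1\}^t}$ is the partition of $F$ induced by the map $f \mapsto \vec{\chi}_f$, i.e.\ $E(\vec{\gamma}) = \{f \in F : \vec{\chi}_f = \vec{\gamma}\}$. In particular, each class $E(\vec{\gamma})$ is a maximal set of facilities sharing one and the same characteristic vector, and $\vec{\chi}_E$ is by definition that common value; the point of the whole reduction to $p$-partition-matroid instances is precisely that this value is an attribute of the class, not of an individual facility.

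The only thing to verify is that, for $E = E(\vec{\gamma})$, the set characteristic vector $\vec{\chi}_E$ indeed equals $\vec{\gamma}$, which I would do coordinate by coordinate. Fix $i \in [t]$. If $\vec{\gamma}_i = 1$, then by definition of $E(\vec{\gamma})$ every $f \in E$ has $(\vec{\chi}_f)_i = 1$, i.e.\ $f \in G_i$; hence $E \subseteq G_i$, so $(\vec{\chi}_E)_i = 1$. If $\vec{\gamma}_i = 0$, then no $f \in E$ lies in $G_i$, so $E \cap G_i = \emptyset$ and $(\vec{\chi}_E)_i = 0$. Hence $\vec{\chi}_E = \vec{\gamma}$, and since $\vec{\chi}_f = \vec{\gamma}$ for every $f \in E(\vec{\gamma})$ by the very definition of the class, we conclude $\vec{\chi}_f = \vec{\gamma} = \vec{\chi}_E$ for all $f \in E$, as claimed.

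There is no genuine obstacle here: the statement holds by construction. The only footnote is the degenerate case $E(\vec{\gamma}) = \emptyset$, for which the claim is vacuous (and such classes can simply be discarded from $\mathcal{E}$). One should also note in passing that the facility duplication performed later in Algorithms~\ref{algo:divkmed} and~\ref{algo:divksupplier} only introduces copies carrying the same characteristic vector as their originals, so the property survives that preprocessing. The proposition will be invoked to show that selecting one facility from each class of a feasible constraint pattern $\{E(\vec{\gamma}_1),\dots,E(\vec{\gamma}_k)\}$ produces a set whose group-membership counts are exactly $\sum_{i\in[k]} \vec{\gamma}_i \ge \rvec$, i.e.\ a feasible solution to the original diversity-aware instance.
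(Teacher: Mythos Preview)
Your proposal is correct and follows essentially the same definitional unwinding as the paper's own proof: both pick $\vec{\gamma}$ with $E = E(\vec{\gamma})$, observe that $\vec{\chi}_E = \vec{\gamma}$ by definition of the class label, and that $\vec{\chi}_f = \vec{\gamma}$ for $f \in E(\vec{\gamma})$ by definition of the partition. Your coordinate-by-coordinate verification of $\vec{\chi}_E = \vec{\gamma}$ and the remarks on the empty case and facility duplication are extra detail the paper omits, but the argument is the same.
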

\begin{proof}
Fix $E \in \mathcal{E}$ and $f \in E$. Since $E \in \mathcal{E}$, there exists
$\vec{\gamma} \in \{0,1\}^t$ such that $E_{\vec{\gamma}} = E$. But this means
$\vec{\chi}_{E} = \vec{\gamma}$. On the other hand, since $f \in
E_{\vec{\gamma}}$, we have that $\vec{\chi}_f = \vec{\gamma}$.
\end{proof}

Suppose $F^* = \{f^*_i\}_{i \in [k]} \subseteq \pazocal{F}$ is a feasible
solution for $I: \sum_{i \in [k]} \vec{\chi}_{f^*_i} \ge \vec{r}$.
Then, consider the instance of \kmedianpm,
$J_{\ell^*} = ((V,d),\{E^*_{1},\cdots,E^*_{k}\},C)$ 
with $E^*_i = E_{\vec{\chi}_{f^*_i}}$, for all $i \in [k]$.
Since, 
\[
\sum_{i \in [k]} \vec{\chi}_{E^*_i} = \sum_{i \in [k]} \vec{\chi}_{f^*_i} \ge \vec{r}
\]
we have that $J_{\ell^*} \in  \mathcal{J}$.
Further, $F^*$ is feasible to $J_{\ell^*}$ since $F^* \cap E^*_{i} = f^*_i$ for
all $i \in [k]$.
For the other direction, fix an instance $J_{\ell^*} =  ((V,d),\{E^*_{1},\cdots,
E^*_{k}\},C) \in \mathcal{J}$ and a feasible solution $F^* =\{f^*_i\}_{i \in
[k]}$ for $J_{\ell^*}$. From Claim~\ref{cl:setfacvec} and the feasiblity of
$F^*$, we have $\vec{\chi}_{f^*_i} = \vec{\chi}_{E^*_i}$. Hence,
\[
 \sum_{i \in [k]} \vec{\chi}_{f^*_i} = \sum_{i \in [k]} \vec{\chi}_{E^*_i}  \ge \vec{r}
\]
which implies $F^* =\{f^*_i\}_{i \in [k]}$ is a feasible solution to $I$. To
complete the proof, we need to show that the distance function defined in
Line~10 is a metric, and \impr function defined in Line~11 is a monotone
submodular function. Both these proofs are the same as that in
\cite{cohen2019tight}. 

\mpara{Approximation Factor:} 
For $I=\divkins$, let
$I'=((U,d),C',F,\mathcal{G}, \vec{r},k)$  be the instance with client coreset
$C'$. 
Let $F^* \subseteq \pazocal{F}$ be an optimal solution to $I$, and let
$\Tilde{F}^* \subseteq \pazocal{F}$ be an optimal solution to $I'$. Then, from
\coreset Lemma~\ref{sec:algorithm:coresets}, we have that
\[
(1-\nu) \cdot \textsf{cost}(F^*,C) \le \textsf{cost}(F^*,C') \le (1+\nu) \cdot \textsf{cost}(F^*,C).
\]
The following proposition, whose proof closely follows that
in~\cite{cohen2019tight}, bounds the approximation factor of
Algorithm~\ref{algo:divkmed}.

\begin{proposition} \label{lem:fptapxpm}
For $\epsilon' >0$, let $J_\ell=((V,d),\{E_1,\cdots,E_k\},C',\epsilon')$ be an input to Algorithm~\ref{algo:kmedpm}, and let $S^*_\ell$ be the set returned. Then,
\[
\textsf{cost}(C',S^*_\ell) \le (1+2/e + \epsilon') \cdot \textsf{OPT}(J_\ell),
\]
where $\textsf{OPT}(J_\ell)$ is the optimal cost of \kmedianpm on $J_\ell$. Similarly, for \divkmeans,
\[
\textsf{cost}(C',S^*_\ell) \le (1+8/e + \epsilon') \cdot \textsf{OPT}(J_\ell),
\]
\end{proposition}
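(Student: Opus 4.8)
The plan is to adapt the submodular-maximization framework that Cohen-Addad et al.~\cite{cohen2019tight} developed for \kmedian and \kmeans so that it respects the partition-matroid constraint built into Algorithm~\ref{algo:kmedpm}. Fix an optimal solution $\tilde F^{*}=\{\tilde f^{*}_{i}\in E_{i}\}_{i\in[k]}$ of $J_\ell$, so $\cost(C',\tilde F^{*})=\textsf{OPT}(J_\ell)$, and for each $i\in[k]$ let $\tilde c^{*}_{i}\in C'$ be a client closest to $\tilde f^{*}_{i}$ and $\tilde\lambda^{*}_{i}=d(\tilde f^{*}_{i},\tilde c^{*}_{i})$, so that $\tilde\lambda^{*}_{i}\le d(\tilde f^{*}_{i},c)$ for every $c\in C'$. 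Since $[[\Delta]_\eta]$ is a geometric $(1+\eta)$-net of $[1,\Delta]$ and $\tilde\lambda^{*}_{i}\le\Delta=\poly(|U|)$, there is an iteration of Algorithm~\ref{algo:kmedpm} in which the chosen clients are $\tilde c^{*}_{1},\dots,\tilde c^{*}_{k}$ (in this order) and each radius $\lambda_{i_j}$ is the net point with $\tilde\lambda^{*}_{j}\le\lambda_{i_j}\le(1+\eta)\tilde\lambda^{*}_{j}$; I would prove the stated bound for the set returned in that iteration, which suffices because Algorithm~\ref{algo:kmedpm} keeps the cheapest candidate over all iterations. In that iteration $\tilde f^{*}_{j}\in\Pi_{j}$ for every $j$ (its discretized distance to $\tilde c^{*}_{j}$ is exactly $\lambda_{i_j}$), and, since the $\Pi_{j}$ are subsets of the pairwise-disjoint parts $E_{1},\dots,E_{k}$, the family $\{\Pi_{j}\}_{j\in[k]}$ is a partition matroid for which $\{\tilde f^{*}_{j}\}_{j\in[k]}$ is a feasible basis.

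Next I would bring in the fictitious facilities $F'=\{F'_{1},\dots,F'_{k}\}$ created in that iteration, with $d(F'_{j},f)=2\lambda_{i_j}$ for $f\in\Pi_{j}$, extended to all of $U$ by the triangle inequality. That this extension is again a metric, and that $\impr(S)=\cost(C',F')-\cost(C',F'\cup S)$ is a nonnegative monotone submodular function of $S$, are exactly the facts established in~\cite{cohen2019tight}, which I would cite verbatim (now over the weighted coreset $C'$). The only quantitative input I need is that $F'$ is almost a $3$-approximation: if $c\in C'$ is served by $\tilde f^{*}_{j}$ in $\tilde F^{*}$, then by the triangle inequality $d(c,F')\le d(c,\tilde f^{*}_{j})+2\lambda_{i_j}$, and since $\lambda_{i_j}\le(1+\eta)\tilde\lambda^{*}_{j}\le(1+\eta)\,d(c,\tilde f^{*}_{j})$ this is at most $(3+2\eta)\,d(c,\tilde f^{*}_{j})$; summing over $C'$ gives $\cost(C',F')\le(3+2\eta)\,\textsf{OPT}(J_\ell)$.

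With these in hand the bound follows from a short chain. Applying the $(1-1/e)$-approximation of~\cite{calinescu2011maximizing} for maximizing a monotone submodular function subject to a matroid constraint, and using that the feasible basis $\tilde F^{*}$ satisfies $\impr(\tilde F^{*})\ge\cost(C',F')-\cost(C',\tilde F^{*})=\cost(C',F')-\textsf{OPT}(J_\ell)$, the set $S_{max}$ computed in that iteration obeys
\[
\cost(C',S_{max})\ \le\ \cost(C',F'\cup S_{max})\ =\ \cost(C',F')-\impr(S_{max})\ \le\ \tfrac{1}{e}\,\cost(C',F')+\bigl(1-\tfrac{1}{e}\bigr)\textsf{OPT}(J_\ell)\ \le\ \Bigl(1+\tfrac{2}{e}+O(\eta)\Bigr)\textsf{OPT}(J_\ell).
\]
Since Algorithm~\ref{algo:kmedpm} sets $\eta=e\epsilon'/2$, the $O(\eta)$ term is absorbed into $\epsilon'$ (rescaling constants if necessary), which yields the claimed $(1+\tfrac{2}{e}+\epsilon')$ bound for \kmediankpm.

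For \divkmeans the argument is identical except in the estimate of $\cost(C',F')$: squaring $d(c,F')\le d(c,\tilde f^{*}_{j})+2\lambda_{i_j}$ and using $\lambda_{i_j}\le(1+\eta)\,d(c,\tilde f^{*}_{j})$ gives $d(c,F')^{2}\le(3+2\eta)^{2}d(c,\tilde f^{*}_{j})^{2}$, hence $\cost(C',F')\le(3+2\eta)^{2}\textsf{OPT}(J_\ell)$, and the same chain produces $\tfrac{1}{e}(3+2\eta)^{2}+(1-\tfrac1e)\to 1+\tfrac{8}{e}$ as $\eta\to0$. I expect the main obstacle to be essentially bookkeeping rather than a new idea: one must carry the $(1+\eta)$ discretization slack (and, for \divkmeans, the squared-triangle-inequality constant) cleanly through the submodular guarantee, and check that the $\impr$ construction, the extended metric, and the submodular-maximization oracle of~\cite{calinescu2011maximizing} all transfer to the weighted coreset $C'$ exactly as in the unweighted metric setting of~\cite{cohen2019tight}; the genuinely new point is simply that the partition matroid $\{\Pi_j\}_j$ forces the oracle to select one facility from each $\Pi_j$, which is what links the submodular value back to a feasible \kmediankpm solution.
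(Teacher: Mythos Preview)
Your proposal is correct and follows essentially the same approach as the paper: fix the iteration with the optimal leaders and (discretized) radii, show the optimal centers lie in the $\Pi_j$'s so that $\tilde F^{*}$ is feasible for the partition matroid, bound $\cost(C',F')\le(3+2\eta)\,\textsf{OPT}(J_\ell)$ via the triangle inequality (squared for \kmeans), and plug this into the $(1-1/e)$ submodular-maximization guarantee to obtain the chain $\cost(C',S_{max})\le\tfrac{1}{e}\cost(C',F')+(1-\tfrac{1}{e})\textsf{OPT}(J_\ell)$. The only cosmetic difference is that the paper records the choice $\eta=e\epsilon'/2$ (and $\eta=e\epsilon'/16$ for \kmeans) so that the $O(\eta)$ slack becomes exactly $\epsilon'$ without any rescaling.
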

This allows us to bound the approximation factor of Algorithm~\ref{algo:divkmed}.
\begin{align*}
\textsf{cost}(T^*, C') 
=   \min_{J_\ell \in \mathcal{J}}  \textsf{cost}(S^*_\ell,C') 
\le  (1+2/e+\epsilon') \cdot \textsf{cost}(\Tilde{F}^*,C') \\
\le (1+2/e +\epsilon') \cdot \textsf{cost}(F^*,C')
\le (1+2/e +\epsilon')(1+ \nu) \cdot \textsf{cost}(F^*,C).
\end{align*}
On the other hand, we have $\textsf{cost}(T^*,C) \le (1+2\nu) \cdot \textsf{cost}(T^*, C')$. Hence, using $\epsilon'=\epsilon/4$ and $\nu = \epsilon/16$, we have
\begin{align*}
\textsf{cost}(T^*,C) &\le (1+2/e +\epsilon')(1+ \nu) (1+2\nu) \cdot \textsf{cost}(F^*,C) \\
&\le (1+2/e+\epsilon)\cdot \textsf{cost}(F^*,C)
\end{align*}
for $\epsilon\le 1/2$. Analogous calculations holds for \divkmeans. This finishes the proof of Lemma~\ref{lemma:partition}.
Now, we prove Proposition~\ref{lem:fptapxpm}.

\begin{proof}
Let $F^*_\ell$ be an optimal solution to $J_\ell$. Then, since
$\textsf{cost}(C',F'\cup F^*_\ell) = \textsf{cost}(C',F^*_\ell)$, we have that
$F^*_\ell$ is a maximizer of the function $\impr(\cdot)$, defined at Line~11.
Hence due to submodular optimization, we have that
\[
\impr(S^*_\ell) \ge (1-1/e) \cdot \impr(F^*_\ell).
\]
Thus,
\begin{align*}
    \cost(C',S^*_\ell) &= \cost(C',F' \cup S^*_\ell) \\
    &= \cost(C',F') - \impr(S^*_\ell)\\
    &\le  \cost(C',F') - (1-1/e) \cdot \impr(F^*_\ell)\\
    &= \cost(C',F') - (1-1/e) \cdot \left( \cost(C',F') - \cost(C',F^*_\ell) \right) \\
    &= 1/e \cdot \cost(C',F') + (1-1/e) \cdot \cost(C',F^*_\ell)
\end{align*}
\end{proof}

The following proposition bounds $\cost(C',F')$ in terms of $\cost(C',F^*_\ell)$.
\begin{proposition} 
 $\cost(C',F') \le (3+2\eta) \cdot \cost(C',F^*_\ell)$.
\end{proposition}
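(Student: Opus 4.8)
The plan is a direct triangle-inequality argument, carried out in the distinguished iteration of Algorithm~\ref{algo:kmedpm} that already figures in the proof of Lemma~\ref{lemma:partition}: the one in which the guessed leaders are $c^*_i$, the closest client of $C'$ to $f^*_i$, and the guessed radii are $\lambda^*_i = d_D(c^*_i,f^*_i)$, the grid value with $d(c^*_i,f^*_i)\le \lambda^*_i \le (1+\eta)\,d(c^*_i,f^*_i)$. As observed there, in this iteration $f^*_i \in \Pi_i$ for every $i \in [k]$, so by the definition of the fictitious facilities at Line~\ref{algo:kmedpm:pi} and the lines following it we have $d(F'_i,f^*_i) = 2\lambda^*_i$.

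First I would partition the (weighted) coreset according to the optimal solution $F^*_\ell$: let $C^*_i \subseteq C'$ be the set of clients served by $f^*_i$ in $F^*_\ell$, so that $\cost(C',F^*_\ell) = \sum_{i\in[k]} \sum_{c\in C^*_i} w_c\, d(c,f^*_i)$. Then, for $c \in C^*_i$, bound $d(c,F') \le d(c,F'_i) \le d(c,f^*_i) + d(f^*_i,F'_i) = d(c,f^*_i) + 2\lambda^*_i$, using first that $F'_i$ is one of the facilities available to $c$, then the triangle inequality, and finally $f^*_i \in \Pi_i$.

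Next I would use that $c^*_i$ is the closest client of $C'$ to $f^*_i$, hence $d(c^*_i,f^*_i) \le d(c,f^*_i)$ for every $c \in C^*_i$, so that $\lambda^*_i \le (1+\eta)\,d(c^*_i,f^*_i) \le (1+\eta)\,d(c,f^*_i)$. Substituting into the previous display gives $d(c,F') \le d(c,f^*_i) + 2(1+\eta)\,d(c,f^*_i) = (3+2\eta)\,d(c,f^*_i)$ for each $c \in C^*_i$. Multiplying by $w_c$ and summing over all $i \in [k]$ and all $c \in C^*_i$ yields $\cost(C',F') \le (3+2\eta)\sum_{i\in[k]}\sum_{c\in C^*_i} w_c\,d(c,f^*_i) = (3+2\eta)\,\cost(C',F^*_\ell)$, which is the claim. (For \divkmeans the same computation with squared distances goes through, at the cost of the usual constant blow-up.)

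I do not anticipate a genuine obstacle: the argument is a one-line triangle inequality plus bookkeeping. The only points requiring care are (i) confirming that in the chosen iteration $f^*_i \in \Pi_i$ and $d(F'_i,f^*_i)=2\lambda^*_i$ — this is inherited from the setup in the proof of Lemma~\ref{lemma:partition} and the fact, verified there following Cohen-Addad et al.~\cite{cohen2019tight}, that the augmented distances to the $F'_i$ form a metric — and (ii) keeping the discretization factor $(1+\eta)$ in the right place when passing from $d(c^*_i,f^*_i)$ to $\lambda^*_i$, which is exactly what produces the additive $2\eta$ in the final bound.
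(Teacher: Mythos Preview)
Your proposal is correct and follows essentially the same route as the paper's proof: both apply the triangle inequality through the optimal facility $f^*_i$ to reach the fictitious facility $F'_i$, use $d(F'_i,f^*_i)=2\lambda^*_i$ since $f^*_i\in\Pi_i$, and then invoke the closest-client property together with the discretization bound $\lambda^*_i\le(1+\eta)\,d(c^*_i,f^*_i)$ to obtain the pointwise inequality $d(c,F')\le(3+2\eta)\,d(c,F^*_\ell)$. The only cosmetic difference is that you organize the sum by optimal clusters $C^*_i$, whereas the paper states the pointwise bound directly for an arbitrary client and then sums.
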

Setting $\eta = \frac{e}{2} \epsilon'$, finishes the proof  for \divkmedian,
\begin{align*}
 \cost(C',S^*_\ell) &\le (3+ e \epsilon')/e \cdot \cost(C',F^*_\ell) + (1-1/e)
\cdot \cost(C',F^*_\ell)\\ &\le (1+2/e+\epsilon') \cdot \cost(C',F^*_\ell).
\end{align*}
For \divkmeans, setting $\eta = \frac{e}{16} \epsilon'$, finishes the proof of Proposition~\ref{lem:fptapxpm},
\begin{align*}
\cost(C',S^*_\ell) &\le (3+ 2e\epsilon'/16  )^2/e \cdot \cost(C',F^*_\ell) + (1-1/e)
 \cost(C',F^*_\ell)\\ &\le (1+8/e+\epsilon') \cdot \cost(C',F^*_\ell).
\end{align*}
for $\eta \le 1$.
%
\begin{proof}
To this end, it is sufficient to prove that for any client $c' \in C'$, it holds
that $d(c',F') \le (3+2\eta) d(c',F^*_\ell)$. Fix $c' \in C$, and let
$f^*_{\ell_j} \in F^*_\ell$ be the closest facility in $F^*_\ell$ with
$\lambda^*_j$ such that $\lambda^*_j = d(c'_{i_j},f^*_{\ell_j})$. Now,
\[
d(c',f^*_{\ell_j}) \ge d(c'_{i_j},f^*_{\ell_j}) \ge (1+\eta)^{([\lambda^*_j]_D -1)} \ge \frac{\lambda^*_j}{1+\eta}.
\]
Using, triangle inequality and the above equation, we have,
\begin{align*}
d(c',F') &\le d(c,F'_j) \le  d(c',F^*_{\ell_j}) + d(F^*_\ell,F') \le
d(c',f^*_{\ell_j}) + 2 \lambda^*_j \\&\le (3+2\eta)  d(c',f^*_{\ell_j}).
\end{align*}
\end{proof}

\end{document}